\title{Minimizing Vertical Length in Linked Bar Charts} 
\author{Steven {van den Broek}}{TU Eindhoven, the Netherlands }{s.w.v.d.broek@tue.nl}{https://orcid.org/0009-0005-6677-3916}{}
\author{Marc van Kreveld}{Utrecht University, the Netherlands}{m.j.vankreveld@uu.nl}{https://orcid.org/0000-0001-8208-3468}{}
\author{Wouter Meulemans}{TU Eindhoven, the Netherlands }{w.meulemans@tue.nl}{https://orcid.org/0000-0002-4978-3400}{Partially supported by the Dutch Research Council (NWO) under project number VI.Vidi.223.137.}
\author{Arjen Simons}{TU Eindhoven, the Netherlands }{a.simons1@tue.nl}{https://orcid.org/0009-0008-1271-180X}{Supported by the Dutch Research Council (NWO) under project number VI.Vidi.223.137.}
\authorrunning{S. van den Broek, M. van Kreveld, W. Meulemans and A. Simons} 
\keywords{Graph drawing, bar chart, length minimization, dynamic programming, fixed-parameter tractability} 
\newcommand{\Reals}{\mathbb{R}}
\begin{document}

\maketitle

\begin{abstract}
A linked bar chart is the augmentation of a traditional bar chart where each bar is partitioned into blocks and pairs of blocks are linked using orthogonal lines that pass over intermediate bars. 
The order of the blocks readily influences the legibility of the links. 
We study the algorithmic problem of minimizing the vertical length of these links, for a fixed bar order. The main challenge lies with \emph{dependent} links, whose vertical link length cannot be optimized independently per bar. We show that, if the dependent links form a forest, the problem can be solved in $O(nm)$ time, for $n$ bars and $m$ links. If the dependent links between non-adjacent bars form a forest, the problem admits an $O(n^4m)$-time algorithm. Finally, we show that the general case is fixed-parameter tractable in the maximum number of links that are connected to one bar.
\end{abstract}

\section{Introduction}
\label{sec:intro}
Bar charts are a ubiquitous tool for visualizing scalar values across categories. 
Stacked bar charts, in particular, allow different quantities to be aggregated in a single column. In their traditional form, they primarily show \emph{single-category} values: values that are uniquely attributable to a specific category. 
In many settings, however, certain quantities are not uniquely attributable to a single category but instead relate to multiple categories. 

Such \emph{cross-category} values may arise in different forms. They may represent shared quantities, for example, in a bar chart encoding total communication per account: the communication between two accounts is present in both bars. They may also represent pairwise uncertainties, that is, quantities that may belong to one of two categories. For example, in election poll results, groups of voters may hesitate between two political parties, or in the analysis of pollution, factories near country borders may contribute to the pollution of either country, though the exact distribution may not be known.

As standard bar charts cannot directly visualize such cross-category values, \emph{linked bar charts} were recently introduced \cite{PairWiseUncertainty}: the single- and cross-category values partition each bar into \emph{blocks} of appropriate height, such that the total bar height reflects the aggregate value of the category. Each cross-category value is then visualized through a \emph{link}: a polyline between the blocks that passes over intermediate bars (\autoref{fig:pairwise-uncertainty}). We refer to blocks as \emph{unlinked} or \emph{linked} blocks, for single-category and cross-category values, respectively, that is, depending on whether the block is linked to another.

\begin{figure}[t]
    \centering
    \includegraphics[]{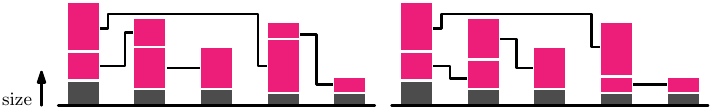}
    \caption{Two linked bar charts  \cite{PairWiseUncertainty} that show the same data using different vertical orderings, with cross-category scalar values between linked blocks (pink) and single-category values drawn as unlinked blocks (gray).
    } 
    \label{fig:pairwise-uncertainty}
    \vspace{-1\baselineskip}
\end{figure}

Note that linked bar charts effectively show a weighted graph: the bars are vertices, weighted by their single-category values, and linked blocks are edges, weighted by their cross-category values.
The quality of the resulting drawing readily depends on the order of the bars, as well as the order in which the blocks are stacked on top of each other. Arising from (orthogonal) graph-drawing literature, there are various natural measures \cite{di1999graph}, such as the number of crossings \cite{crossingsandplanarization,purchase1996validating,radermacher2018geometric}, the length of the links \cite{tamassia1987embedding,tamassia1988automatic} and the number of bends \cite{blasius2016optimal,formann1993drawing,purchase1996validating,tamassia1987embedding}. Vertical distance between elements has also been considered as a quality measure for other visualizations, such as storylines \cite{Storylines} and parallel coordinate plots \cite{parallelCoordinates}.

For quality measures that rely solely on the bar order without considering the stacking order of the blocks, the problem is effectively that of drawing the (unweighted) graph with a one-page book embedding. 
The outerplanar graphs are exactly the family of graphs that can be drawn without edge crossings in this style \cite{BookThicknes}. For such graphs, minimizing total and maximum (horizontal) edge length, cutwidth, or bandwidth is polynomial-time solvable \cite{PlanarLinearArrangements}. 
These problems are NP-hard for general graphs \cite{NP-Completeness}.

In contrast, the stacking order of the blocks adds a new dimension to this classic graph-drawing problem, which remains unstudied at this time.

\subparagraph{Contributions.} We study the following problem: given a weighted graph with fixed vertex order, we aim to compute a linked bar chart---a stacking of blocks in each bar---that minimizes the total vertical link length, while avoiding crossings between links that share an endpoint.
In \autoref{sec:defs}, we introduce our definitions. We distinguish between \emph{dependent} and \emph{independent} links, indicating whether the optimal position of a linked block depends on the position of the block it is linked to.
In \autoref{sec:forest}, we describe an $O(nm)$-time algorithm, for cases with $n$ bars and $m$ links, where the subgraph of dependent links is a forest.
If the subgraph of dependent links between non-adjacent bars is a forest, the problem can be solved in $O(n^4 m)$ time (\autoref{sec:forestplus}).
The problem is fixed-parameter-tractable, parameterized by the maximum degree of a bar (\autoref{sec:fpt}). This algorithm still runs in polynomial time, if only the degree of the subgraph of dependent links is bounded by a constant.

\section{Definitions, notation and observations}
\label{sec:defs}

Our input is a weighted graph $G = (V,E, w)$, where $V$ represents a sequence of $n$ vertices $v_1,\ldots,v_n$ in fixed order and $E \subseteq V^2$ a set of $m$ edges. The weight function $w: V \cup E\rightarrow \Reals^+$ assigns weights both to vertices (single-category values) and edges (cross-category values).
The \emph{span} of an edge is the subsequence of vertices in $V$ between its endpoints, including these endpoints. 
The \emph{intermediate} vertices of an edge refer to its span minus its endpoints.

\subparagraph{Bars, blocks and links.} 
To visualize $G$, we draw the vertices in $V$ as a sequence of unit-width \emph{bars} $B_1,\hdots,B_n$, arranged on a horizontal baseline with a spacing of one unit between adjacent bars.
Note that the horizontal position of the bars is fixed by their order in the sequence $B_1, \dots, B_n$. 
Each vertex $v_i$ corresponds to bar $B_i$ for all $1 \leq i \leq n$; we use bar to refer to the vertex in $V$ as well as its visual representation, as there is a one-to-one mapping between the two.

An edge $e$ induces a \emph{linked block} in each of the two bars it connects: a rectangle of unit width and height $w(e)$. A vertex $v$ induces an \emph{unlinked block} of height $w(v)$ in its corresponding bar.
The height of a block $b$ is denoted by $h(b)$.

We draw an edge $e$ as an orthogonal \emph{link} 
that connects the two linked blocks corresponding to $e$; see also \autoref{fig:pairwise-uncertainty}. 
A link connects the centers of two blocks; we refer to these centers as the endpoints of the link. 
If at least one endpoint is placed higher than all intermediate bars of $e$, then the link generally has two bends; a link may be horizontal in some cases and have zero bends. 
Otherwise, the link is drawn using four bends. Links always pass over all intermediate bars. 
We assume that between two consecutive bars, there is enough horizontal space for the links to run between those bars, and ignore this issue from now on.

\subparagraph{Stacking blocks.}
As the bar order is fixed, the horizontal length of links is as well. 
To minimize total vertical length, we stack the blocks within a bar appropriately while avoiding crossings between links that connect to the same bar. A stacking of blocks corresponds to an order on the incident edges of a bar. Unlinked blocks are placed at the baseline, below the stacking of linked blocks.

Consider a bar $B_j$. 
All incident edges $\{ B_i, B_j \}$ to an earlier bar $B_i$ go towards the left from~$B_j$. 
We denote this ordered sequence of edges by $L_j = \{l_1,\ldots,\}$, sorted in decreasing order of index of the other bar. 
Similarly, the rightward edges are denoted by $R_j = \{r_1,\ldots, \}$---the edges $\{B_j,B_k\}$ with $k > j$---in increasing order of index of the other bar. 
To avoid crossings between links that originate from a common bar, the stacking order must contain both $L_j$ and $R_j$ as a subsequence. 
That is, the stacking order must be some merge of these two ordered sequences.
Any merge is valid, but they incur different vertical link lengths, as shown in \autoref{fig:pairwise-uncertainty}.
For ease of notation, we identify an edge $e$ in $L_j$ or $R_j$ 
with the corresponding block in bar $B_j$.

Merging left- and rightward subsequences in this way implies that the number of crossings in our visualization equals that in the crossings in the one-page book drawing of $G$ that respects the fixed vertex order.
Two links intersect if and only if their spans intersect in at least two bars and neither is a subset of the other.

Consider some linked block $b$ in a bar $B_j$, corresponding to vertex $v_j$, for edge $l_i \in L_j$.
Its vertical center point is determined by its height, the height of the unlinked block, and all blocks prior to it in $L_j$---the order of the blocks in $L_j$ is fixed---and also by all blocks in $R_j$ that are chosen to be below it in the stacking order.
Hence, the center coordinate of a block is influenced only by the number of blocks in the other sequence occurring before it.
With $k$ blocks of $R_j$ below $b$, the vertical center of $b$ is given by $y(b,k) = \frac{1}{2} h(b) + w(v_j) + \sum_{x = 1}^{i-1} h(l_x) + \sum_{x=1}^{k} h(r_x)$.
As the $y$-position of a block is uniquely determined by $k$, we also refer to $k$ as the position of the block.
The situation is symmetrical for a block in $R_j$.

For the remainder, we do not explicitly treat unlinked blocks: their fixed effect is readily captured via the $y$-function defined above. Hence, we refer to linked blocks simply as blocks. 

\subparagraph{Link types.}
Consider an edge $e = \{ B_i, B_j \}$ with $i < j$. Its linked blocks $b$ and $b'$ must be in $R_i$ and~$L_j$. Let $\uparrow\!b = y(b, |L_i|)$ and $\downarrow\!b = y(b, 0)$ be the highest and lowest possible $y$-coordinate for $b$ its center and analogously define~$\uparrow\!b' = y(b', |R_j|)$ and $\downarrow\!b' = y(b', 0)$. Let $H$ 
denote the height of the tallest intermediate bar.

We call a link between block $b$ and $b'$ \emph{independent}, if its vertical length can be minimized by placing its blocks closer to a fixed target $t$ that does not depend on the relative position of $b$ and $b'$. That is, its vertical length can be expressed as $|t - y| + |t - y'|$, for given center coordinates $y$ for $b$ and $y'$ for $b'$, for some fixed target $t$. 
In the following cases, links are independent 
(see \autoref{fig:independent-link}, left): 

\begin{enumerate}
    \item If there is an intermediate bar that is higher than the highest coordinate for one of the blocks, the link must always go up to $H$ from the lower block: the target is $H$. Formally, if $H \geq\; \uparrow\!b$ or $H \geq\; \uparrow\!b'$, then $t = H$.
    \item Otherwise, if the intervals are interior disjoint, the direction of the vertical piece is always the same: we can use as a target the lowest position of the higher interval. Formally, if $H < \min\{ \uparrow\!b, \uparrow\!b'\}$ and $\downarrow\!b \geq\; \uparrow\!b'$, then $t = \downarrow\!b$; analogous for $\uparrow\!b \leq\; \downarrow\!b'$.
    \item Otherwise, if one of the intervals is a singleton, we use the fixed value as a target. Formally, if $\uparrow\!b = \downarrow\!b$ (that is, $L_i = \emptyset$), then $t = \uparrow\!b$; analogous for $\uparrow\!b' = \downarrow\!b'$ (that is, $R_j = \emptyset$).
\end{enumerate}

\begin{figure}[b]
    \centering
    \includegraphics{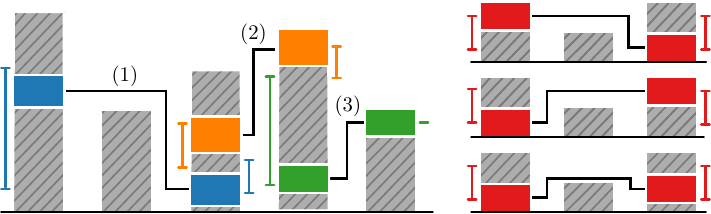}
    \caption{Schematics of links, as determined by the interval of possible center coordinates (drawn ranges). Left: the three independent link cases.
    Right: three possible placements for a single dependent link.}
    \label{fig:independent-link}
\end{figure}

If none of the cases hold, the link is \emph{dependent}: its vertical length depends on the relative position of the two blocks, and we cannot generally assign a static target (see \autoref{fig:independent-link}, right). 
Dependent links come in two variants: \emph{adjacent dependent links} and \emph{non-adjacent dependent links}, depending on whether the two linked bars are consecutive.

We abbreviate the various link types as IL (independent link), DL (dependent link), ADL (adjacent dependent link) and NADL (non-adjacent dependent link). 
We refer to edges and blocks also via these types: for example, a dependent edge is an edge with a dependent link.

Each of the link types induces a subgraph of $G$. In particular, our results rely on the structure of the subgraph of dependent links.

\begin{restatable}{lemma}{noDLcross}\label{obs:noDLcross}
The subgraph of dependent links is a plane one-page book embedding for the given bar order, and thus is outerplanar. 
\end{restatable}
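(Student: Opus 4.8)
The plan is to show that the subgraph of dependent links is \emph{plane} in the one-page book drawing induced by the fixed bar order; outerplanarity is then immediate from the characterization of outerplanar graphs as exactly those admitting a crossing-free one-page book embedding~\cite{BookThicknes}. By the crossing characterization stated above, two links cross precisely when their spans interleave, so it suffices to prove that no two \emph{dependent} edges interleave. I would prove this by contradiction.

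Suppose two dependent edges $e_1 = \{B_i, B_j\}$ and $e_2 = \{B_k, B_l\}$ interleave; by swapping the roles of $e_1$ and $e_2$ we may assume $i < k < j < l$. Then $B_k$ is an intermediate bar of $e_1$ and $B_j$ is an intermediate bar of $e_2$. Writing $H_1$ and $H_2$ for the heights of the tallest intermediate bars of $e_1$ and $e_2$, and $T(B)$ for the total height of a bar $B$, this gives $T(B_k) \le H_1$ and $T(B_j) \le H_2$.

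The two structural facts I would combine are, first, that the highest attainable center of a block lies strictly below the total height of its bar, and second, the consequence of dependence. For the block $b_1' \in L_j$ of $e_1$ and the block $b_2 \in R_k$ of $e_2$, the top edge of each block is at most its bar height while its center sits half a (strictly positive) block height lower, so $\uparrow\!b_1' < T(B_j)$ and $\uparrow\!b_2 < T(B_k)$. For dependence, note that case~1 of the independence criterion does not apply to a dependent link, so both its blocks admit a center strictly above its tallest intermediate bar; applied to $e_1$ this yields $H_1 < \uparrow\!b_1'$ and applied to $e_2$ it yields $H_2 < \uparrow\!b_2$. Chaining these inequalities gives
\[
H_1 < \uparrow\!b_1' < T(B_j) \le H_2 < \uparrow\!b_2 < T(B_k) \le H_1,
\]
so $H_1 < H_1$, a contradiction. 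Hence dependent edges do not interleave, the subgraph of dependent links is plane, and outerplanarity follows.

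The crux of the argument is the cyclic chain in the display: dependence of $e_1$ forces its block in $B_j$ above $e_1$'s tallest intermediate bar, while $B_j$ is itself an intermediate bar of the crossing edge $e_2$, and the situation is symmetric, closing the cycle. The step needing the most care is the strict inequality $\uparrow\!b < T(B)$ between a block's maximal center and its bar height, which relies on block heights being strictly positive (as the weights are); with strictness in hand the chain collapses to the desired contradiction.
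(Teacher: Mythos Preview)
Your proof is correct and follows essentially the same approach as the paper: both derive a contradiction from two interleaving dependent edges by using that dependence forces the intermediate bars to be strictly shorter than the endpoint bars. The paper simply asserts this height comparison ``by definition,'' whereas you spell out the underlying inequalities $H < {\uparrow\!b} < T(B)$ explicitly; the resulting contradiction $T(B_k) < T(B_j) < T(B_k)$ is the same.
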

\begin{proof}
By definition, a dependent edge $e$ requires that its intermediate bars have strictly lower height than both of its endpoints. 
Hence, edges between an intermediate bar of $e$ and a bar outside the span of $e$ cannot satisfy this property. 
Therefore, links of dependent edges cannot cross: their spans are either disjoint, one is contained in the other, or they share one endpoint $B_i$, with one block being in $L_i$ and the other in~$R_i$.

For the fixed bar order, the dependent links of $G$ form a one-page book embedding without crossings. Hence, it is outerplanar \cite{BookThicknes}.
\end{proof}

\subparagraph{Computing with links.}
By their definition, the contribution of independent links to the total vertical link length can be decomposed into a \emph{cost} for placing each of its blocks, being the distance of its placement to the target of the link. For dependent links, their \emph{cost}---contribution to the total vertical link length---can be computed only when knowing the placement of both its blocks. We use $\lambda$ to denote these costs. That is, an independent block $b$ placed at position $i$ has cost $\lambda(b,i) = |t - y(b,i)|$, where $t$ denotes its target. For a dependent link $e = (b,b')$, the cost depends on the the largest intermediate bar with height $H$: if both endpoints lie below $H$, it is the sum of the two vertical segments; otherwise, it is the absolute difference of the endpoints (see \autoref{fig:independent-link}, right). Hence, letting $i$ and $j$ denote the positions of the endpoints, its cost is computed as:
\[ \lambda(e,i,j) = \left\{ 
\begin{array}{ll}
|H - y(b,i)| + |H - y(b',j)| & \text{if }  H > y(b, i)  \text{ and } H > y(b',j) \\
|y(b,i) - y(b',j)|  & \text{otherwise}
\end{array}
\right.
\]

We assume that we know, for each link, its type and the maximum height of its intermediate bars, such that the cost of a placed independent block and of a placed dependent link can be computed in $O(1)$ time. By the lemma below, we can precompute such information efficiently, and the running time of this precomputation step is dominated by all our other algorithms.

\begin{restatable}{lemma}{computationLinkTypes}\label{lem:computationLinkTypes}
Given $G = (V, E, w)$ as adjacency list with only $V$ in order, we can compute the $L$- and $R$-sets for all bars in $O(n + m)$ time, and augment the bars and links in $O(\min\{n^2, n+ m\log n\})$ time, such that the type of a link and $\lambda$ can be computed in $O(1)$ time.
\end{restatable}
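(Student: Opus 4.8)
The plan is to precompute three kinds of auxiliary data --- the sorted sequences $L_j$ and $R_j$, prefix sums of block heights inside every bar, and the value $H$ for every link --- after which both the link type and $\lambda$ reduce to a constant number of arithmetic comparisons. I would first build the $L$- and $R$-sets without any explicit sort, exploiting that $V$ already arrives in order. In one forward sweep over $k = 1, \dots, n$ I append, for every neighbour $j < k$ of $v_k$, the edge $\{v_j,v_k\}$ to $R_j$; since $k$ increases monotonically, each $R_j$ is emitted in increasing order of the opposite index, as required. A symmetric backward sweep over $j = n, \dots, 1$ appends $\{v_j,v_k\}$ to $L_k$ for each neighbour $k > j$, so each $L_k$ is emitted in decreasing order of the opposite index. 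Each edge is handled exactly once per sweep (at its larger, resp.\ smaller, endpoint), so the two sweeps together cost $O(n+m)$.

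Next I would compute, for every bar $B_j$, its total height $\beta_j = w(v_j) + \sum_{e \ni v_j} w(e)$, together with the prefix sums $\sum_{x \le i} h(l_x)$ along $L_j$ and $\sum_{x \le k} h(r_x)$ along $R_j$. Over all bars there are only $O(m)$ blocks, so this is again $O(n+m)$. By the closed form given for $y$, these prefix sums let me evaluate $y(b,k)$ for any block $b$ and position $k$ in $O(1)$ time, and hence also the four interval endpoints $\uparrow\!b, \downarrow\!b, \uparrow\!b', \downarrow\!b'$.

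The remaining ingredient is $H$, the maximum bar height over the intermediate range $B_{i+1}, \dots, B_{j-1}$ of each link $e = \{B_i, B_j\}$; this is the crux and the only step that is not plainly linear. It is a range-maximum query on the array $(\beta_1, \dots, \beta_n)$, and I would present the two options whose minimum yields the claimed bound. Either precompute all $O(n^2)$ range maxima by the recurrence $M(i,j) = \max\{M(i,j-1), \beta_j\}$ with $M(i,i)=\beta_i$, in $O(n^2)$ time, and answer each query by table lookup; or build a range-maximum structure (a segment tree, constructed bottom-up in $O(n)$ time) and answer each of the $m$ link queries in $O(\log n)$ time, for a total of $O(n + m\log n)$. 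Taking whichever is cheaper gives $O(\min\{n^2, n+m\log n\})$, which absorbs all of the $O(n+m)$ steps above, since $m < n^2$ for a simple graph.

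Finally, with $\beta$, the prefix sums, and a value $H$ stored per link, classifying a link reduces to a constant number of comparisons: testing the three independent cases in turn and, failing those, declaring the link dependent and checking $j = i+1$ to distinguish adjacent from non-adjacent. For an independent link I additionally store its target $t$, so that $\lambda(b,i) = |t - y(b,i)|$ is $O(1)$; for a dependent link the two-branch formula for $\lambda(e,i,j)$ is likewise $O(1)$. The sole genuine obstacle is the range-maximum step and its time/space trade-off; everything else is bookkeeping with prefix sums, and it is precisely the given order on $V$ that lets the $L/R$ construction avoid a sort.
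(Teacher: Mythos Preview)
Your proposal is correct and follows essentially the same approach as the paper: exploit the given vertex order to build $L_j,R_j$ without sorting, use prefix sums so that $y$, $\uparrow\!b$, $\downarrow\!b$ are $O(1)$, and reduce the computation of $H$ per link to a range-maximum problem solved either by an $O(n^2)$ sweep/table or an $O(n)$-built tree with $O(\log n)$ queries. The only differences are cosmetic --- you use two monotone sweeps (forward for $R$, backward for $L$) where the paper uses a single forward sweep with append/prepend, and you tabulate all $M(i,j)$ whereas the paper computes $H$ on the fly while walking $R_i$ --- but neither changes the argument or the bounds.
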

\begin{proof}
Consider the bars that represent the vertices in $V$. To compute the $L$- and $R$-sets, we use the following algorithm, as the bar order matches the order of $L$ (reversed) and $R$. So, we can initialize the lists to be empty, and traverse the bars in order. For a bar $B_i$, we iterate over the edges incident to $V_i$. Let $V_j$ denote the other endpoint of such an edge. If $j < i$, then we place the block of the link corresponding to the edge at the end of $R_j$: all edges to sets before $V_i$ should come before and have already been added. If $j > i$, then we place the block of the link corresponding to the edge at the start of $L_j$: all blocks corresponding to links that come from earlier bars should come later in $L_j$ and have already been added. 
As we process each bar and edge once, this takes $O(n+m)$ time.

Assuming the $L$- and $R$-sets have been computed, we now compute the link types: first, we compute the prefix sum of $h$ for all $L$- and $R$-sets. This allows us to compute $\uparrow\!b$ and $\downarrow\!b$ for a block $b$ as well as the total height of a bar in constant time.
For each bar $B_i$, we iterate over $R_i$ in order, to determine the link type of those links. We simultaneously traverse the bars to maintain the maximum height $H$ of intermediate bars. By comparing $\uparrow\!b$, $\downarrow\!b$ of both blocks and $H$, we can determine the link type in constant time.
As we sweep over all pairs of bars, this takes $O(n^2 + m) = O(n^2)$ time in total.

When the graph has $o(n^2)$ edges, we can do slightly better. We first build a binary tree on the ordered list of bars, augmenting each node with the maximum height of a bar in the subtree. This can easily be computed in $O(n)$ time, as determining the height of a bar takes $O(1)$ time.
Now, for each edge in $G$, we can compute $H$ in $O(\log n)$ time, and use it to determine the link type in $O(1)$ time. The total running time is hence $O(n + m\log n)$.

With the above process, we have augmented every bar with the prefix sums of the $L$- and $R$-sets, and every link with its type and maximum height of an intermediate bar. Since the prefix sums allow us to compute the $y$-coordinate of a placement in $O(1)$ time and $H$ is precomputed, we can compute $\lambda(e,i,j)$ in $O(1)$ time for any link $e$ with placement $i,j$.
\end{proof}

Finally, we observe that any interactions in how blocks are stacked in a bar arise from dependent links. Thus, we can minimize the cost for each connected component in the dependent-link subgraph separately, and sum their results to minimize the total vertical link length. The algorithms presented in the next sections hence look at ways of further decomposing the dependent-link subgraph into smaller parts and parameterizing dependencies between these parts.

\section{The dependent links form a forest}
\label{sec:forest}

When the subgraph of DLs is a forest, we can minimize the vertical link length efficiently, by using the trees to determine an order in which to process the bars.
Note that this case generalizes charts in which the bars are sorted by height.

\begin{restatable}{theorem}{forestDL}\label{thm:forestDL}
Given $G = (V, E, w)$ where the subgraph of dependent links is a forest, minimizing the vertical link length takes $O(nm)$ time.
\end{restatable}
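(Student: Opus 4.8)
The plan is to root each tree of the dependent-link forest and run a bottom-up dynamic program in which the only coupling across a dependent edge is through the \emph{positions} of its two induced blocks. By the observation at the end of \autoref{sec:defs}, stacking decisions in two bars interact only through a shared dependent link, so I may optimize each connected component of the dependent-link subgraph independently and add the results; bars incident to no dependent link (as in the height-sorted case noted above) are singleton components and fall out as a degenerate case. It therefore suffices to handle one rooted tree.

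For a bar $B_j$ with parent-edge block $b_p$, I aim to compute a function $f_j(i)$ giving the minimum cost of the subtree rooted at $B_j$ when $b_p$ is placed at position $i$. Inside $B_j$ three kinds of blocks contribute: each independent block $b$ has the precomputed position-cost $\lambda(b,\cdot)$; the block $b_p$ has its cost deferred to the parent, so it is charged $0$; and each child edge $e_t$ to a child $B_{c_t}$ contributes through its block $b_t$. Having already computed $f_{c_t}$, I fold the edge together with the entire child subtree into a single position-cost for $b_t$,
\[ g_{e_t}(k) \;=\; \min_{k'}\bigl(\lambda(e_t,k,k') + f_{c_t}(k')\bigr), \]
after which every block of $B_j$ carries a cost that depends only on its own position.

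It then remains, per bar, to pick the stacking of $B_j$---a monotone merge of $L_j$ and $R_j$---that minimizes the sum of these per-block position-costs, while extracting the optimum for each admissible position of $b_p$. I would solve this with a grid dynamic program over states $(a,b)$ recording how many blocks of $L_j$ and of $R_j$ have been stacked so far: appending the next $L_j$-block charges its cost at position $b$, and appending the next $R_j$-block charges its cost at position $a$. Running this forward from $(0,0)$ and backward from $(|L_j|,|R_j|)$ and combining the two tables along the seam where $b_p$ is inserted yields $f_j(i)$ for all $i$ simultaneously; a root bar, having no parent, simply reports the global minimum over its grid. Each grid DP costs $O(|L_j|\,|R_j|)$, and since $|R_j|\le n$ and $\sum_j |L_j| = m$, this sums to $O(nm)$ over all bars.

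The main obstacle is computing the combination $g_{e_t}$ within the same budget: evaluating $g_{e_t}(k)$ by a naive inner minimization over $k'$ for every $k$ costs the product of the two position ranges, which over the forest can exceed $O(nm)$. The point I would establish is that $\lambda(e,k,k')$ is a Monge (totally monotone) function of the two positions. Writing it as $\max\{2H - y(b,k) - y(b',k'),\,|y(b,k) - y(b',k')|\}$ and using that $y(b,\cdot)$ and $y(b',\cdot)$ are monotone in the position, one checks the submodularity inequality directly; adding the column-only term $f_{c_t}(k')$ preserves it. Hence SMAWK-style totally-monotone matrix searching returns all row minima $g_{e_t}(k)$ in time linear in the two position ranges, i.e.\ $O(\deg(B_j)+\deg(B_{c_t}))$ per edge. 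Summed over the forest this is $O\bigl(\sum_v \deg(v)^2\bigr)$, which is $O(nm)$ because $\deg(v)\le n$ gives $\sum_v \deg(v)^2 \le n\sum_v \deg(v) = 2nm$, matching the grid-DP cost. Verifying the Monge property and checking that the forward/backward seam realizes exactly the admissible positions of $b_p$ are the two steps that require the most care.
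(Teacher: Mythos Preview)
Your approach is essentially identical to the paper's: root each tree of the dependent-link forest, compute in post order a table $f_j(\cdot)$ (the paper's $P(B,\cdot)$) giving the optimum for the subtree as a function of the parent block's position, and realise this per bar with a forward/backward grid DP over $L_j\times R_j$ (the paper's $D_\downarrow$ and $D_\uparrow$, split at the parent block $l_{p^*}$).

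The one substantive difference is your child-combination step. You assert that computing $g_{e_t}(k)=\min_{k'}\bigl(\lambda(e_t,k,k')+f_{c_t}(k')\bigr)$ by the naive double loop ``can exceed $O(nm)$'' and therefore invoke a Monge/SMAWK argument. This concern is unfounded. The naive loop costs $O(d_B\,d_{B'})$ per tree edge $\{B,B'\}$, and summing over the forest already stays within $O(nm)$: orient each tree edge from child $c$ to its unique parent $p(c)$ and bound
\[
\sum_{\{B,B'\}\in F} d_B\,d_{B'}
\;=\;\sum_{c} d_c\,d_{p(c)}
\;\le\; n\sum_{c} d_c
\;\le\; 2nm,
\]
using $d_{p(c)}\le n$. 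This is exactly the ``analogous derivation'' the paper invokes after its $\sum_i d_i^2\le 2nm$ bound. Hence the SMAWK machinery buys nothing, and of the two steps you flag as delicate, the Monge verification is unnecessary and the forward/backward seam is routine (it is literally the paper's split into $D_\downarrow(p^*-1,k)$ and $D_\uparrow(p^*+1,k+1)$).
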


\begin{restatable}{observation}{singlemax}\label{obs:singlemax}
If the bars have a single local maximum in height, the dependent links form a collection of paths.
\end{restatable}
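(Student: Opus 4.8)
The plan is to prove something stronger than a degree bound: under a single local maximum, \emph{every} dependent link joins two \emph{consecutive} bars. Granting this, the dependent-link subgraph is contained in the Hamiltonian path $B_1 B_2 \cdots B_n$, and every subgraph of a path is a disjoint union of subpaths; hence the dependent links form a collection of paths.

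First, I would isolate the only property of dependent edges I need, which is exactly the one established in the proof of \autoref{obs:noDLcross}: for a dependent edge $e = \{B_i, B_j\}$ with $i < j$, every intermediate bar is \emph{strictly} shorter than both endpoint bars $B_i$ and $B_j$. (This is because dependence forces $H < \min\{\uparrow\!b, \uparrow\!b'\}$, as otherwise independence Case~1 applies, and each center coordinate is bounded by the height of its bar.) Next I would turn ``single local maximum'' into a usable form: letting $B_p$ be the peak, the heights are non-decreasing on $B_1,\ldots,B_p$ and non-increasing on $B_p,\ldots,B_n$.

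The core of the argument is a short case analysis showing that no non-adjacent edge can be dependent. Given an edge $\{B_i, B_j\}$ with $j \ge i+2$, I would exhibit an intermediate bar that is at least as tall as one endpoint, which by the characterization above rules out dependence. If the span $[i,j]$ lies on the non-decreasing side (so $j \le p$), then $B_{i+1}$ is at least as tall as $B_i$; symmetrically, if it lies on the non-increasing side ($i \ge p$), then $B_{j-1}$ is at least as tall as $B_j$; and if the span contains the peak strictly in its interior ($i < p < j$), then $B_p$ is at least as tall as both endpoints. The only remaining situation is that $B_p$ coincides with an endpoint, and I would check that this reduces to the monotone cases, again forcing adjacency. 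Consequently only consecutive bars can be joined by a dependent link, giving the claim.

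The step I expect to require the most care is precisely these boundary cases around the peak, together with the treatment of equal heights. The resolution is that dependence demands \emph{strict} shortness of intermediate bars, so an intermediate bar that merely ties an endpoint already prevents dependence; this lets weak (rather than strict) monotonicity suffice and neutralizes any plateaus. Everything else is a direct consequence of unimodality and the characterization of dependent edges inherited from \autoref{obs:noDLcross}.
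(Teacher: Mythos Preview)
Your proposal is correct and follows essentially the same approach as the paper: show that under a single local maximum there are no non-adjacent dependent links (so all DLs are ADLs), whence the dependent subgraph sits inside the path $B_1\cdots B_n$ and is a disjoint union of paths. The paper states this in one line (``with a single local maximum, there are only ILs and ADLs''), whereas you spell out the unimodality case analysis explicitly; the content is the same.
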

\begin{proof}
An NADL requires that all intermediate bars are strictly lower than its endpoints: with a single local maximum, there are only ILs and ADLs.
\end{proof}

Before we prove \autoref{thm:forestDL}, we turn our focus to a simpler case, where there are no dependent edges.

\begin{restatable}{lemma}{noDL}\label{lem:noDL}
Given $G = (V, E, w)$ without dependent edges, minimizing the vertical link length takes $O(nm)$ time.
\end{restatable}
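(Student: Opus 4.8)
The key leverage is independence: for an independent link $e=(b,b')$ with target $t$, its cost is $|t-y(b,i)|+|t-y(b',j)|$, which is \emph{separable} into a cost $\lambda(b,i)$ charged to the placement of $b$ in its bar and a cost $\lambda(b',j)$ charged to $b'$ in its bar. Summing over all links, the total vertical link length equals $\sum_{\text{blocks }b} \lambda(b,\text{pos}(b))$, and this sum partitions according to the bar containing each block. Consequently, the stacking of every bar can be optimized \emph{independently} of all other bars (this is also consistent with the final observation of \autoref{sec:defs}, since without dependent edges every connected component of the dependent-link subgraph is a single bar). So the problem reduces to: for each bar $B_j$, choose a merge of $L_j=l_1,\dots,l_p$ and $R_j=r_1,\dots,r_q$ that minimizes the sum of the costs of its blocks.

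First I would preprocess with \autoref{lem:computationLinkTypes}, which computes the $L$- and $R$-sets, prefix sums, and the target of each independent link, so that any value $\lambda(b,k)$ can be evaluated in $O(1)$ time. For a single bar $B_j$, I would then solve the merge problem by a grid dynamic program. Let $D[a][b]$ be the minimum total cost of stacking the prefix blocks $l_1,\dots,l_a$ and $r_1,\dots,r_b$ as a valid partial merge (built from the bottom). The crucial point, again enabled by independence, is that the cost of a block depends only on how many blocks of the \emph{other} sequence lie below it, and not on anything placed above. Hence, when $l_a$ is the topmost of the prefix it has exactly $b$ blocks of $R_j$ below it, and when $r_b$ is topmost it has exactly $a$ blocks of $L_j$ below it, giving the recurrence
\[
D[a][b] = \min\bigl\{\, D[a-1][b] + \lambda(l_a,b),\ \ D[a][b-1] + \lambda(r_b,a)\,\bigr\},
\]
with $D[0][0]=0$ and the obvious boundary cases. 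The answer for $B_j$ is $D[p][q]$, and an optimal stacking is recovered by backtracking. Correctness follows because any merge places its blocks in this bottom-up fashion, the per-block costs are additive, and the recurrence enumerates exactly the two possibilities for the last-placed block.

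For the running time, each bar's table has $(|L_j|+1)(|R_j|+1)$ entries, each filled in $O(1)$ time using the precomputed prefix sums, so bar $B_j$ costs $O(|L_j|\cdot|R_j|)$. Summed over all bars, the total is $\sum_j |L_j|\cdot|R_j| \le n\sum_j |L_j| = nm$, using $|R_j|\le n$ together with $\sum_j |L_j|=m$ (each edge contributes exactly once to a left-set). This dominates the $O(\min\{n^2,n+m\log n\})$ preprocessing of \autoref{lem:computationLinkTypes}, yielding the claimed $O(nm)$ bound. I expect the main subtlety to be the decomposition argument: carefully justifying that independence makes every block's cost a function solely of its position within its own bar, so that the bars decouple and the per-bar cost is exactly a function of the number of opposing blocks below each block; once this is established, the merge DP and the running-time bound are routine.
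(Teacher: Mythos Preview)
Your proposal is correct and matches the paper's proof essentially verbatim: the same per-bar grid DP $D[a][b]$ with the same recurrence, and the same $O(nm)$ aggregation. The only cosmetic difference is that the paper bounds $\sum_j d_j^2$ via $d_j\le n$ and the handshaking lemma, whereas you bound $\sum_j |L_j|\,|R_j|$ via $|R_j|\le n$ and $\sum_j |L_j|=m$; both yield $O(nm)$.
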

\begin{proof}
Without dependent links, we can solve each bar in isolation, via dynamic programming.
Let $D(p,q)$ denote the minimum cost of stacking blocks $\{l_1, \ldots,l_p\} \subseteq L$ and blocks $\{r_1,\ldots,r_q\} \subseteq R$. 
Since either $l_p$ or $r_q$ must be the top block in this subproblem, we can express $D$ recursively:
\[
   D(p,q) = \left\{\begin{array}{ll}
      0 & \textrm{if }p = 0 \wedge q = 0 \\
       D(p-1,q) + \lambda(l_p,q) & \textrm{if }p > 0 \wedge q = 0 \\
       D(p,q-1) + \lambda(l_q,p) & \textrm{if }p = 0 \wedge q > 0 \\
       \min\{ D(p-1,q) + \lambda(l_p,q), D(p,q-1) + \lambda(l_q,p) \} & \textrm{otherwise} \\ 
   \end{array}\right.
\]

The final solution is then $D(|L|, |R|)$. 
In the appropriate order, that is, for increasing values of $p$ and $q$, we can determine the value in constant time per cell of $D$. 
So, solving one bar of degree $d_i$ takes $O(|L||R|) = O(d_i^2)$ time. 
Summing over all bars gives $O(\sum_i d_i^2)$ total time. Since $d_i \leq n$, we find that $\sum_i d_i^2 \leq \sum_i n \cdot d_i = n \sum_i d_i$. 
As $\sum_i d_i = 2m$ (hand-shaking lemma), we find that the total running time is bounded by $O(nm)$.
\end{proof}

In the remainder of this section, we prove \autoref{thm:forestDL}.
First, note that the trees in the forest can be optimized independently, as there are no dependent links between them.
We describe below the algorithm for a single tree $T$. Note that bars without dependent links form trees with a single vertex and their dynamic program reduces to that of \autoref{lem:noDL}.

\subparagraph{Setup.}
We root tree $T$ at an arbitrary bar; each bar $B$ in $T$ is now the root of a subtree $T_{B}$.
A dependent block in $B$ has an \emph{associated subtree}: subtree $T_{B'}$ when $B$ is part of link $\{B, B'\}$.
Without loss of generality, let $l_{p^*} \in L$ be the block in $B$ that corresponds to the dependent link that connects to the parent of $B$ in $T$. We refer to this link as the \emph{parent link}.

Let ${\downarrow_B}(p, q)$ denote the first $p < p^*$ and $q$ blocks from the $L$ and $R$ sets of $B$ respectively.
Let ${\Downarrow_B}(p, q)$ denote the blocks in ${\downarrow_B}(p, q)$ together with all blocks in subtrees associated with dependent blocks in ${\downarrow_B}(p, q)$.
We define the blocks in ${\uparrow_B}(p, q)$ and ${\Uparrow_B}(p, q)$ symmetrically, for $p > p^*$.
We define the cost of a set of blocks $S$ (such as ${\uparrow_B}(p, q)$ and ${\Uparrow_B}(p, q)$) as the cost of all independent blocks plus the cost of all dependent links whose endpoint blocks are both in the set $S$.

\subparagraph{Splitting the cost.}
We process $T$ using a post-order traversal to compute for each bar $B$ the minimum cost $P(B, k)$ of the blocks in subtree $T_B$, given that block $l_{p^*}$ has $k$ blocks of $R$ below it (\autoref{fig:dependent-subgraph-forest-app}, left).
To compute the $P(B,k)$ values efficiently, we observe that placing $l_{p^*}$ at position $k$ splits $B$ into two parts: ${\downarrow_B}(p^* - 1, k)$ and ${\uparrow_B}(p^* + 1, k + 1)$ (\autoref{fig:dependent-subgraph-forest-app}, middle).
Let $D_\downarrow(p,q)$ and $D_\uparrow(p,q)$ denote the minimum cost of ${\Downarrow_B}(p, q)$ and ${\Uparrow_B}(p, q)$ respectively (\autoref{fig:dependent-subgraph-forest-app}, right).
Then $P(B,k) = D_\downarrow(p^*-1,k) + D_\uparrow(p^*+1,k+1)$; the only bar without a parent link is the root $B_\textrm{r}$, which is simply computed as $P(B_\textrm{r}) = D_\downarrow(|L_\textrm{r}|,|R_\textrm{r}|)$, reflecting the optimal cost for the entire tree.

\begin{figure}[t]
    \centering
    \includegraphics[page=3]{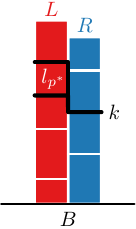}
    \hfill
    \includegraphics[page=1]{forest.pdf}
    \hfill
    \includegraphics[page=2]{forest.pdf}
    \caption{
    Left: $P(B, k)$ is the minimum total cost of all blocks in $B$ and its child subtrees, except that of block $l_{p^*}$ (hatched), when $l_{p^*}$ is placed with $k$ blocks of $R$ under it. Grey dots are independent links. 
    Middle: block $l_{p^*}$ connects $B$ to its parent in $T$. When placing $k$ blocks of $R$ under $l_{p^*}$, $B$ is split (black line) into two parts that can be stacked independently. Right: $D_\downarrow(p, q)$ is the minimum cost of the first $p$ and $q$ blocks from $L$ and $R$, respectively. }
    \label{fig:dependent-subgraph-forest-app}
\end{figure}

\subparagraph{Computing each part.}
We can compute both $D$ tables using a dynamic program akin to that of \autoref{lem:noDL}. Specifically, $D_\downarrow$ follows the same recursion, and $D_\uparrow$ requires a simple inversion, where we consider which of the two options of the considered sets is the lowest block instead of the highest block. However, we need to explicitly consider the dependent links in these blocks. Consider a dependent link $e$ between blocks $\{b,b'\}$; assume $b \in R$, the other case is analogous. Its cost is $\lambda(e,i,j)$ when placing $b$ at $y(b,i)$ and $b'$ at $y(b',j)$. Hence, the cost of placing block $b$ at $i$ becomes $\min_{j} \lambda(e,i,j) + P(B',j)$, where $B'$ is the associated child bar of $B$ in~$T$.

\subparagraph{Proving the theorem.} 
For each child bar, we can precompute these costs for dependent blocks in $O(d d')$ time, where $d$ and $d'$ are the degrees of bars $B$ and $B'$, respectively. Afterwards, a constant-time lookup to get the cost of a block suffices. 
Thus, we compute both $D$ tables in $O(d^2)$ time for a bar of degree $d$, resulting in a total running time of $O(\sum d^2 + \sum d \cdot d')$. Via an analogous derivation as in the proof of \autoref{lem:noDL}, this reduces to an upper bound of $O(nm)$ time.

\section{The non-adjacent dependent links form a forest}
\label{sec:forestplus}

We take our previous results one step further by only requiring that the subgraph of the NADLs is a forest. That is, there may be ADLs in addition to this forest. Note that this case is also implied by a specific structure on the total bar heights.

\begin{restatable}{theorem}{forestplusDL}\label{lem:forestplusDL}
Given $G = (V, E, w)$ where the subgraph of non-adjacent dependent links is a forest, minimizing the vertical link length takes $O(n^4m)$ time.
\end{restatable}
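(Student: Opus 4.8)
The plan is to generalize the rooted-tree dynamic program behind \autoref{thm:forestDL}. As before, interactions only occur within connected components of the dependent-link subgraph, so I would treat each component separately and, within a component, root the forest formed by the NADLs (a forest by hypothesis; note that \autoref{obs:noDLcross} still guarantees that \emph{all} dependent links, ADLs included, are pairwise non-crossing, so their spans form a laminar family). I would then reuse the machinery of \autoref{thm:forestDL}: a post-order traversal of the NADL forest that, for each bar $B$, summarizes its subtree $T_B$ by a table indexed by the placement of the block of the parent NADL, folding each already-processed NADL child into the per-bar internal merge of $L$ and $R$ exactly as in \autoref{lem:noDL}. The new ingredient to handle is the ADLs, which couple \emph{consecutive} bars and therefore create dependencies that the single-parameter interface of \autoref{thm:forestDL} does not capture.

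The first step is to pin down what an ADL demands: since adjacent bars have no intermediate bars, a dependent adjacent link simply pays the vertical distance between the centers of its two blocks, so it tries to align a block of $R_i$ with a block of $L_{i+1}$. Maximal runs of bars joined by ADLs thus behave like chains that must be stacked jointly. I would fold the ADLs into the traversal by a left-to-right chain argument: when the internal merge of a bar $B$ is computed, the placements of the (at most two) ADL blocks linking $B$ to its neighbours $B-1$ and $B+1$ must be exposed, so that their alignment cost can be charged. Concretely, I would enrich the per-bar dynamic program of \autoref{lem:noDL} so that, in addition to the parent-NADL block, it is parameterized by the placement of the ADL block on the boundary of the subtree range, and then combine adjacent summaries by minimizing the alignment cost over the shared ADL placement.

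The crucial structural claim, which I expect to be the main obstacle, is that the interface between a subtree and the rest of the instance can always be described by a \emph{constant} number of block placements. Here I would lean on the laminar (non-crossing) structure from \autoref{obs:noDLcross}: the span of the parent NADL bounds the region occupied by the subtree, and an ADL can leave this region only at its boundary, so only a bounded number of ADL couplings straddle any subtree boundary in addition to the single parent NADL. Making this precise---choosing a rooting of the NADL forest for which each summarized subtree has a controlled left/right boundary, and verifying that it suffices to carry two placements (the parent-NADL block together with the escaping ADL block)---is where the real work lies, and it must be done so that each ADL is charged exactly once, whether it lies wholly inside a subtree or crosses its boundary.

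Granting a two-placement interface, the running time follows the accounting of \autoref{thm:forestDL}. Each subtree summary is now a two-dimensional table with $O(n^2)$ entries, and computing an entry requires a minimization---a min-convolution-style combination over pairs of placements when merging a child summary or aligning an ADL---costing $O(n^2)$. This gives $O(n^4)$ work per link, and summing over the links with the same handshaking/degree bound used in \autoref{lem:noDL} and \autoref{thm:forestDL} yields the claimed $O(n^4 m)$ total.
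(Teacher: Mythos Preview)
Your plan is on the right track---extend the rooted-tree DP of \autoref{thm:forestDL} with additional interface parameters to expose escaping ADLs---but there is a concrete gap in the interface count, and a structural step the paper uses that you omit.

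\textbf{The interface is three placements, not two.} A subtree $T_B$ occupies a contiguous range of bars (this follows from the laminar structure you invoke), so it has \emph{two} boundaries: a leftmost bar and a rightmost bar. An ADL can escape at either one. Together with the parent link, that is three interface placements, not two. The paper's table is accordingly $P(B,k,l,r)$: $k$ for the parent block, $l$ for the leftward ADL leaving at the leftmost bar of $T_B$, and $r$ for the rightward ADL leaving at the rightmost bar. Your ``parent-NADL block together with the escaping ADL block'' undercounts, and with only two parameters the DP is incorrect, not merely slower.

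\textbf{The paper does not traverse the NADL forest directly.} Instead it first extends the NADL forest with a maximal set of ADLs to a forest $F$ that spans each DL component, and then roots each tree of $F$ at its \emph{leftmost} bar. This buys two things you are missing. First, it handles ADLs that connect different NADL-trees within a single DL component; your ``left-to-right chain argument'' gestures at this but does not say what structure is being traversed. Second, the leftmost-root choice guarantees that the parent link of every bar is the topmost dependent block in its $L$- or $R$-set, which is what makes the split into $D_\downarrow$ and $D_\uparrow$ (each now carrying two ADL parameters) clean.

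With the three-parameter interface and the extended forest, the paper's per-bar cost is $O(\Delta^4 d)$ (the internal tables $D_\downarrow,D_\uparrow$ have four indices and each entry needs a one-dimensional minimization), and summing $\Delta^4 d$ over bars gives $O(n^4 m)$. Your accounting ``$O(n^2)$ entries times $O(n^2)$ per entry, $O(n^4)$ per link'' lands on the right bound but for the wrong reason; once you fix the interface to three parameters you should redo this count.
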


\begin{restatable}{observation}{singlemin}\label{obs:singlemin}
If the bars have a single local minimum in height, the non-adjacent dependent links form a forest.
\end{restatable}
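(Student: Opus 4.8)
The plan is to first determine the shape that a single local minimum forces on the non-adjacent dependent links (NADLs), and then derive acyclicity from the non-crossing property of \autoref{obs:noDLcross}. Throughout, let $h(B_i)$ denote the total height of bar $B_i$.

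First I would fix a local minimum $B_m$, so that the bar heights are (weakly) decreasing up to $B_m$ and (weakly) increasing from $B_m$ onward. The key structural claim is that every NADL \emph{straddles} $B_m$: it has one endpoint with index $<m$ and the other with index $>m$. To see this, recall that a dependent edge requires every intermediate bar to be \emph{strictly} lower than both of its endpoints. If an NADL $\{B_i,B_j\}$ with $i<j$ had both endpoints on the decreasing side (so $j\le m$), then its intermediate bar $B_{i+1}$, which exists and is intermediate since $j\ge i+2$, satisfies $h(B_{i+1})\ge h(B_j)$ by monotonicity, contradicting the required strict inequality $h(B_{i+1})<h(B_j)$; the increasing side (using $B_{j-1}$) and the case of an endpoint exactly at $m$ are symmetric. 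Hence the NADL subgraph is bipartite, with parts $\{B_1,\dots,B_{m-1}\}$ and $\{B_{m+1},\dots,B_n\}$.

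Second, since NADLs are dependent links, \autoref{obs:noDLcross} guarantees that no two of them cross. Because every NADL straddles $m$, two NADLs $\{B_i,B_j\}$ and $\{B_{i'},B_{j'}\}$ with $i<i'<m<j,j'$ cross precisely when $j<j'$; non-crossing therefore yields the anti-monotone rule $i<i'\Rightarrow j\ge j'$. Phrased via neighborhoods, $\min N(i)\ge \max N(i')$ whenever $i<i'$ are left endpoints. Third, I would rule out cycles. Suppose a simple cycle exists; being bipartite it alternates between the two parts and so contains a left vertex. Let $i^\ast$ be the left vertex of the cycle of largest index, and let $j_a\ne j_b$ be its two cycle-neighbors, both right vertices. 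The second cycle-neighbor of $j_a$ is a left vertex $i_a'<i^\ast$, so the anti-monotone rule gives $j_a\ge \min N(i_a')\ge \max N(i^\ast)\ge j_b$; symmetrically $j_b\ge j_a$, forcing $j_a=j_b$ and contradicting simplicity of the cycle. Thus the NADL subgraph is acyclic, i.e.\ a forest. (When $m$ is an endpoint of the sequence, one part is empty and the subgraph is edgeless, which is vacuously covered.)

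\textbf{Main obstacle.} The delicate step is the last one: a naive attempt would only exclude $4$-cycles, whereas one must handle cycles of every length. Selecting the extreme (largest-index) left vertex and applying the anti-monotone neighborhood inequality avoids any dependence on the cycle length, and I expect this to be the crux of the argument; carefully verifying the straddling claim in the presence of possible ties in the heights is the only other point requiring attention, though the strictness in the dependent-edge definition makes it go through unchanged.
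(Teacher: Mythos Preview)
Your proof is correct and follows the same overall strategy as the paper---derive a structural restriction on NADLs from the single-minimum assumption, then combine with the non-crossing property (\autoref{obs:noDLcross}) to rule out cycles---but the execution differs in a noteworthy way. The paper observes only that the NADLs at each bar are \emph{one-sided} (all in $L$ or all in $R$), then picks the \emph{leftmost} bar of a hypothetical cycle and argues directly via planarity that the cycle cannot close up without crossing one of its own edges. You instead prove the stronger global fact that every NADL \emph{straddles} the minimum $B_m$, making the NADL subgraph bipartite; from non-crossing you then extract the anti-monotone rule $i<i'\Rightarrow \min N(i)\ge\max N(i')$ and close the argument combinatorially at the \emph{rightmost} left-side vertex. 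Your route is a touch more explicit (the anti-monotone inequality does all the work and the cycle length never enters), while the paper's route is shorter and leans on the geometric crossing picture; both are clean, and your bipartite formulation arguably makes the structure of the NADL subgraph more transparent.
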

\begin{proof}
The NADLs of a bar are \emph{one-sided}: they are all in $L$ or all in $R$. Assume for contradiction that a simple cycle of NADLs exists. Consider the leftmost bar $B_i$ in this cycle. It links in the cycle to $B_j$ and $B_k$ with $i < j < k$.  The next link in the cycle at $B_j$ must be in $L_j$, due to the bar being one-sided, and thus go to a bar $B_x$ with $x < j$. If $x < i$, then $B_i$ was not the leftmost bar; if $x = i$, the cycle is not simple; if $x > i$, then any path from $B_x$ to $B_k$ must cross the edge $\{B_i, B_k\}$, contradicting planarity (\autoref{obs:noDLcross}).
\end{proof}

To prove \autoref{lem:forestplusDL}, we develop an extended dynamic program, akin to that of \autoref{thm:forestDL}. We first extend the forest of NADLs to a forest $F$ that contains all NADLs and a maximal set of ADLs; that is, ADLs are added until it is no longer possible to add one without creating a cycle.
As $F$ has no DLs between different trees, we can optimize each tree independently. In the remainder, we consider a single such tree $T \in F$.

\subparagraph{Setup.}
We root $T$ at its leftmost bar and denote the subtree at a bar $B$ with $T_B$.
As in \autoref{thm:forestDL}, we associate to a block in $B$ the subtree $T_{B'}$ of its child bar $B'$.
The dependent links from $B$ to its children form subsequences of $L$ and $R$ that we denote by $L^D = ( l_1^D, l_2^D, \ldots )$ and $R^D =  ( r_1^D, r_2^D, \ldots )$.
The left of \autoref{fig:dependent-non-adjacent-subgraph-forest-app} illustrates these links and upcoming definitions.

\begin{figure}[b]
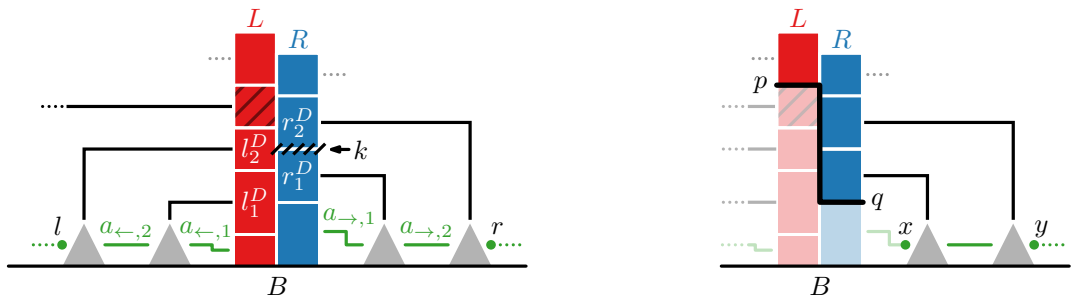

    \includegraphics[page=4]{forest.pdf}
    \hfill
    \includegraphics[page=9]{forest.pdf}
    \caption{Left: illustration of $P(B, k, l, r)$ and ADLs $a_{\leftarrow, 1}, a_{\leftarrow, 2}, \dots$. The parent link (hatched) is excluded. Right: illustration of $D_\uparrow(p, q, x, y)$. There cannot be a leftward DL to a child of $B$. 
    }
    \label{fig:dependent-non-adjacent-subgraph-forest-app}
\end{figure}

By planarity and choice of root, the parent link of a bar $B$ is the highest dependent link in its $L$- or $R$-set. Below, we assume that it is a leftward link; the other case is analogous. 

Now, $T_B$ and the parent of $B$ have the following order in the linked bar chart, from left to right: the parent bar of $B$, then the subtrees associated with $(\dots, l_2^D, l_1^D$), then bar $B$, and lastly the subtrees associated with $(r_1^D, r_2^D, \dots)$.
An ADL can connect only bars that are adjacent in the linked bar chart.
Let $\dots, a_{\leftarrow, 2}, a_{\leftarrow, 1}, a_{\rightarrow, 1}, a_{\rightarrow, 2}, \dots$ denote the ADLs that are not in $F$ and have one endpoint in a child subtree of $B$ and one endpoint in some other bar outside that subtree, in left-to-right order. 
The arrow in the subscript denotes on which side of $B$ the link lies.

Recall that the cost of a set of blocks is the cost associated with all independent blocks and the cost of all dependent links whose endpoint blocks are both in the set.
Consider the subtree $T_B$ rooted at bar $B$. We aim to develop a dynamic program to compute the cost of blocks in $T_B$.
There are, at most, three dependent links that have only one endpoint in $T_B$: the parent link, a leftward ADL at the leftmost bar in $T_B$, and a rightward ADL at the rightmost bar in $T_B$. By parameterizing their position by $k$, $l$ and $r$, we derive a dynamic program $P(B,k,l,r)$: the minimum total cost for subtree $T_B$, with the three dependent links positioned according to the three parameters. As in \autoref{thm:forestDL}, the root has no such parameters and $P(B)$ is the optimal solution of the entire tree.

\subparagraph{Splitting the cost.}
Our approach to computing $P(B, k, l, r)$ efficiently is structurally the same as that of \autoref{thm:forestDL}.
Again, let $l_{p^*}$ denote the block in $L$ that connects to the parent of $B$, and define ${\Downarrow_B}(p, q)$ and ${\Uparrow_B}(p, q)$ as in \autoref{thm:forestDL}.
In addition to the parent link, there are at most two dependent links that have only one endpoint in ${\Downarrow_B}(p, q)$: a leftward ADL at the leftmost bar in ${\Downarrow_B}(p, q)$, and a rightward ADL at the rightmost bar.
Let $D_\downarrow(p, q, x, y)$ denote the minimum cost of ${\Downarrow_B}(p, q)$ when the at most two ADLs with only one endpoint in ${\Downarrow_B}(p, q)$ are positioned according to $x$ and $y$ (\autoref{fig:dependent-non-adjacent-subgraph-forest:2:app}, left).
We symmetrically define  $D_\uparrow(p, q, x, y)$ (\autoref{fig:dependent-non-adjacent-subgraph-forest-app}, right), noting that there cannot be dependent links above parent link $l_{p^*}$ in $L$, by the choice of root.
Value $P(B, k, l, r)$ can be derived by combining the minimum costs of the two parts of $B$ and aligning the at most one ADL that has an endpoint in each of the two parts. More precisely, the value is equal to $\min_{a, b} ( D_\downarrow(p^*-1, k, l, a) + D_\uparrow(p^*+1, k+1, b, r) )$.

\begin{figure}[b]
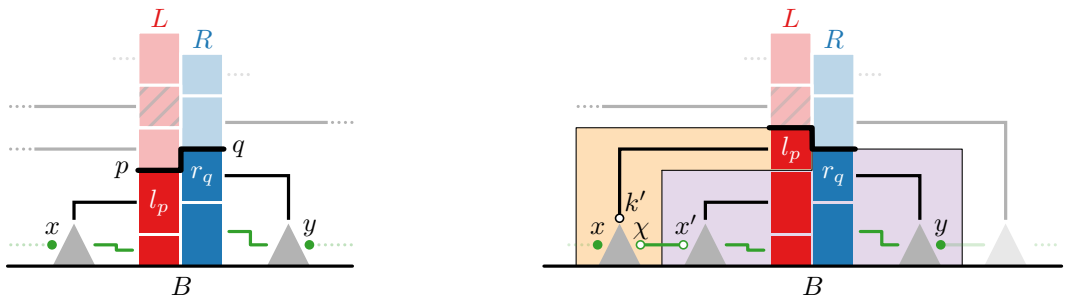

    \includegraphics[page=6]{forest.pdf}
    \hfill
    \includegraphics[page=7]{forest.pdf}
    \caption{Left: illustration of $D_\downarrow(p, q, x, y)$. Right: the subproblems for computing $D_\downarrow(p, q, x, y)$ when dependent block $l_p$ is on top, with Cost~\ref{eq:D'} (purple) and Cost \ref{eq:P'} (orange).}
    \label{fig:dependent-non-adjacent-subgraph-forest:2:app}
\end{figure}

\subparagraph{Computing each part.}
As with $D$ of \autoref{lem:noDL} and $D_\downarrow$ of \autoref{thm:forestDL}, we can define $D_\downarrow(p, q, x, y)$ recursively by observing that either $l_p$ or $r_q$ must be the top block.
Without loss of generality, assume $l_p$ is the top block; the other case is symmetric.
Furthermore, assume w.l.o.g.\ that every DL $l_i^D$ has a corresponding ADL $a_{\leftarrow, i}$; if in the following such an ADL does not exist, simply replace its cost by zero.
If $l_p$ is independent, then we need only add the cost of $l_p$ to $D_\downarrow(p-1, q, x, y)$, which is easily determined in constant time.
If $l_p$ is dependent, then the previous leftmost ADL $a_{\leftarrow, i}$ with only one endpoint in ${\Downarrow_B}(p-1, q)$ now lies completely within ${\Downarrow_B}(p, q)$.
Given that $a_{\leftarrow, i+1}$ is positioned at $x$, we need to determine two values: the minimum cost of all blocks in ${\Downarrow_B}(p-1, q)$ plus the cost of $a_{\leftarrow, i}$, and the minimum cost of the subtree $T_{B'}$ associated with $l_p$ plus the cost of $l_p$.
These two parts need to agree on where they place the left endpoint $\chi$ of $a_{\leftarrow, i}$.
More precisely, these costs are:
\begin{align}
    &\min_{x'} (D_\downarrow(p-1, q, x', y) + \lambda(a_{\leftarrow, i}, \chi, x')) \text{, and}\label{eq:D'}\\
    &\min_{k'} ( P(B', k', x, \chi) + \lambda(l_p, k', q)) \text{, where DL } l_p \text{ connects } B \text{ to a child bar } B'.%
    \label{eq:P'}
\end{align}
Cost $D_\downarrow(p, q, x, y)$ is then the minimum over $\chi$ of the sum of these costs.
See the right of \autoref{fig:dependent-non-adjacent-subgraph-forest:2:app} for an illustration.
The dynamic program for $D_\uparrow$ is similar.

\subparagraph{Proving the theorem.}
To compute the minimum vertical link length for $G$, we process each tree in $F$ via a post-order traversal.
Consider a bar $B$ and assume all $P$ values of child bars have been computed and stored.
Let $d$ be the degree of $B$ and $\Delta$ be the maximum degree of a bar in graph $G$.
First, we compute $D_\downarrow$ efficiently as follows, for increasing values of $p$ and $q$:
\begin{enumerate}
\item Precompute Cost~\ref{eq:D'} in $O(\Delta)$ time for each value of $y$ and $\chi$. This takes $O(\Delta^3)$ time for one pair $(p,q)$.
\item Precompute Cost~\ref{eq:P'} in $O(\Delta)$ time for each value of $x$ and $\chi$. This takes $O(\Delta^3)$ time for one pair $(p,q)$.
\item Compute $D_\downarrow$ in $O(\Delta)$ time for each value of $x$ and $y$, using two look-ups in our precomputed costs for each possible $\chi$. This takes $O(\Delta^3)$ time for one pair $(p,q)$.
\end{enumerate}
With $O(d^2)$ pairs $(p,q)$, we hence spend $O(\Delta^3d^2)$ time per bar to determine $D_\downarrow$.
After treating $D_\uparrow$ analogously, we compute each of the $O(\Delta^2d)$ entries of $P$ in $O(\Delta^2)$ time, and thus in $O(\Delta^4d)$ time per bar.
As processing a single bar takes $O(\Delta^4 d)$ time, the total running time becomes $O(\sum \Delta^4d) = O(n^4m)$.

\section{General case}
\label{sec:fpt}

We now investigate the case of a general graph $G$, that is, without placing further assumptions on the structure of the dependent links. Although the complexity of the problem remains open, a simple reduction from 3-Partition shows that a generalized version---in which bars have multiple unlinked blocks that can be ordered arbitrarily---is strongly NP-hard, as shown in \autoref{thm:nphardmultivertex} at the end of this section.

First, we prove the main theorem of this section (\autoref{thm:generalCaseFPT}), a fixed-parameter tractability result that uses two structural parameters:

\begin{description}
\item[$\Delta$:] the maximum degree of a bar in the full graph $G$;
\item[$\delta$:] the maximum degree of a bar in the subgraph of dependent links.
\end{description}

As $\delta \leq \Delta \leq n-1$, \autoref{thm:generalCaseFPT} implies that our problem is fixed-parameter tractable in $\Delta$, and admits a polynomial-time algorithm when only $\delta = O(1)$, but $\Delta$ may be large.

\begin{restatable}{theorem}{generalCaseFPT}\label{thm:generalCaseFPT}
Given $G = (V, E, w)$, the vertical link length can be minimized in $O(\Delta^{3\delta} \delta  n)$ time.
\end{restatable}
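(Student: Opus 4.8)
The plan is to generalize the dynamic-programming framework from \autoref{thm:forestDL} and \autoref{lem:forestplusDL} so that it can handle an arbitrary dependent-link subgraph, paying in the running time for the fact that we can no longer decompose the dependent links into a tree. Recall that the essential bottleneck in the forest cases was the number of dependent links that cross any single ``cut'' in the stacking order of a bar. In those cases the number of such links was $O(1)$ (the parent link plus a constant number of ADLs), which is what kept the parameters of the $D$- and $P$-tables small. For a general graph, the number of dependent links incident to a bar is bounded by $\delta$, and the key structural fact (\autoref{obs:noDLcross}) is that the dependent-link subgraph is outerplanar: dependent links incident to a common bar cannot cross, so the blocks of dependent links that ``straddle'' a cut are limited by the degree $\delta$ of that bar. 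First I would process the bars one at a time in their fixed left-to-right order $B_1, \dots, B_n$, maintaining a dynamic-programming state that records the stacking decisions already made for dependent links whose other endpoint has not yet been reached.

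The state I would carry is the following: for each dependent link that has been ``opened'' (its left endpoint already processed) but not yet ``closed'' (its right endpoint not yet processed), we record the position at which its processed block was placed. The crucial claim, to be proved from outerplanarity and the fixed bar order, is that at any point in the sweep the number of simultaneously open dependent links is at most $O(\delta)$: an open dependent link incident to $B_i$ contributes to $B_i$'s dependent degree, and because dependent links nest or are disjoint (never properly crossing), the open links at a given moment all pass over the current position and form a laminar family whose ``width'' at any bar is controlled by that bar's dependent degree $\delta$. Each open link's recorded position ranges over $O(\Delta)$ possible values (there are at most $\Delta$ blocks in a bar, so $O(\Delta)$ stacking positions), so the number of distinct states is $O(\Delta^{O(\delta)})$. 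For each bar $B$, I would enumerate all states of its $O(\delta)$ open dependent links, and for each such configuration solve the within-bar stacking problem: merge $L_B$ and $R_B$ into a single stacking order, paying the independent-block costs $\lambda(b,\cdot)$ via the interleaving DP of \autoref{lem:noDL}, and paying the dependent-link costs $\lambda(e,i,j)$ whose second endpoint is fixed by an incoming state (for a closing link) or passed forward as a new parameter (for an opening link). I would accumulate the cost into the successor states for $B_{i+1}$.

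The running-time accounting is where the stated bound $O(\Delta^{3\delta}\delta n)$ must come out. For each of the $n$ bars, the state space has size $O(\Delta^{c\delta})$ for the appropriate constant $c$; within each state, solving the bar's internal stacking and updating costs for the $O(\delta)$ dependent links incident to that bar costs roughly $O(\Delta^{O(\delta)}\cdot\delta)$, because each of the $O(\delta)$ dependent links at the bar may need to be aligned against an $O(\Delta)$-sized range of positions. Bounding the exponent so that the product of the state-space size and the per-state work lands at $\Delta^{3\delta}$ is the delicate bookkeeping step; I expect to argue that the incoming open-link positions (contributing one factor of roughly $\Delta^{\delta}$), the outgoing open-link positions (a second factor $\Delta^{\delta}$), and the internal alignment of the bar's own dependent blocks (a third factor $\Delta^{\delta}$) multiply to give $\Delta^{3\delta}$, with the remaining $\delta n$ absorbing the per-link alignment work summed over all bars.

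The main obstacle I anticipate is twofold. First, proving rigorously that at most $O(\delta)$ dependent links are ever simultaneously open during the sweep, and that this quantity is governed by $\delta$ rather than by the total number of dependent links: this requires a careful argument from the laminar (nesting/disjoint) structure guaranteed by \autoref{obs:noDLcross}, identifying the exact bar whose degree certifies the width of the open-link family. Second, the state-transition step must correctly carry the stacking constraint that $L_B$ and $R_B$ each appear as a subsequence of the merged order, while simultaneously threading the positions of open dependent links through in a way that the cost $\lambda(e,i,j)$ is charged exactly once, at the moment a link is closed. Getting the transition to respect both the merge constraint and the open-link position constraints, without double-counting or losing the $O(1)$-time cost evaluations guaranteed by \autoref{lem:computationLinkTypes}, is the part that I expect to require the most care and the most bookkeeping to verify that the exponent of $\Delta$ does not exceed $3\delta$.
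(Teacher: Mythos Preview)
Your sweep-based plan has a genuine gap: the central claim that at most $O(\delta)$ dependent links are ever simultaneously open during a left-to-right scan is false. Consider a ``rainbow'' of $k$ nested dependent links $\{B_i, B_{2k+1-i}\}$ for $i = 1,\ldots,k$, where the bar heights decrease toward the middle and each $B_i$ and $B_{2k+1-i}$ is given one additional independent link outward (so that cases~1--3 of the independent-link definition fail). Then every bar has dependent degree~$1$, so $\delta = 1$, yet when the sweep reaches $B_k$ all $k$ dependent links are open. Your state space would be $\Omega(\Delta^k)$, not $\Delta^{O(\delta)}$. The laminar structure of dependent links (\autoref{obs:noDLcross}) does guarantee nesting, but the \emph{depth} of a laminar family is not bounded by the maximum degree of its underlying graph; you are confusing a cutwidth/pathwidth bound with a degree bound.

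The paper avoids this by exploiting the stronger consequence of outerplanarity: the dependent-link subgraph has \emph{treewidth} at most~$2$, so it admits a nice tree decomposition with bags of size at most~$3$ and $O(n)$ nodes. The dynamic program runs over this tree, and the state at a node records, for each of the at most three bars in its bag, the positions of that bar's at most~$\delta$ dependent blocks (each position ranging over $O(\Delta)$ values). This is where the exponent $3\delta$ comes from: three bars times $\delta$ dependent blocks each, not three phases of ``incoming/outgoing/internal'' alignment as you guessed. The per-node work---handling leaf, introduce, join, and forget nodes, with the forget case absorbing the cost of independent blocks via the interleaving DP of \autoref{lem:noDL}---is $O(\Delta^{3\delta}\delta)$, and summing over $O(n)$ nodes gives the bound. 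A left-to-right sweep corresponds to a path decomposition, and the pathwidth of outerplanar graphs is not bounded by a constant, which is exactly why your approach cannot recover the stated running time.
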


Our algorithm uses a rooted nice tree decomposition. 
These specific decompositions are generally useful for presenting dynamic programs, and are standard concepts in literature; see, for example, \cite{TreeDecomp}. We present their main aspects below, before presenting our results.

\subparagraph{Rooted nice tree decompositions.}
Let $G' = (V,E')$ denote the subgraph of dependent links. 
A \emph{rooted tree decomposition} of $G'$ is a rooted tree $T$ in which each node\footnote{To avoid ambiguity, we use nodes to refer to the vertices of $T$.} $u$ of $T$ has an associated \emph{bag} $X_u$ of bars from $V$, each link from $E'$ has at least one bag containing both its endpoints, and every bar of $V$ occurs only in the bags of a nonempty, connected subtree of $T$.
Such a decomposition is \emph{nice} if the root bag is empty and each node $u$ is one of the following types:

\begin{description}
    \item[Leaf:] $u$ has no children and $X_u = \emptyset$.
    \item[Forget:] $u$ has one child $v$ with $X_u \subset X_v$ and $|X_u|= |X_v| -1$.
    \item[Introduce:] $u$ has one child $v$ with $X_u \supset X_v$ and $|X_u|= |X_v| + 1$.
    \item[Join:] $u$ has two children $v$ and $w$ with $X_u = X_v = X_w$.
\end{description}

Due to its outerplanarity (\autoref{obs:noDLcross}), the treewidth of $G'$---the minimal size of the largest bag in such a decomposition, minus one---is at most $2$ \cite{outerplanarTW}. Thus, we may compute a nice rooted tree decomposition of $O(n)$ nodes with at most $3$ bars per bag in $O(n)$ time \cite{cygan2015parameterized}. In the remainder, we omit the qualifications nice and rooted, and denote this tree decomposition as a triple $(T,X,r)$.

\subparagraph{Dynamic program.} 
To minimize the vertical link length in a linked bar chart of~$G$, we run a dynamic program as a post-order traversal on the tree decomposition $(T, X,r)$ of $G'$. Given a node $u \in T$, we denote the set of nodes in the subtree rooted at $u$ as $T_u$.
We say that a bar is in the \emph{past} of $u$ if it is contained in $\bigcup_{v \in T_u} X_v \setminus X_u$, that is, in the bag of a descendant of $u$, but not in that of $u$ itself. The past of $u$ is then the set of all bars in the past of $u$; see \autoref{fig:TreeDecomp}.

\begin{figure}[t]
    \includegraphics[page=3]{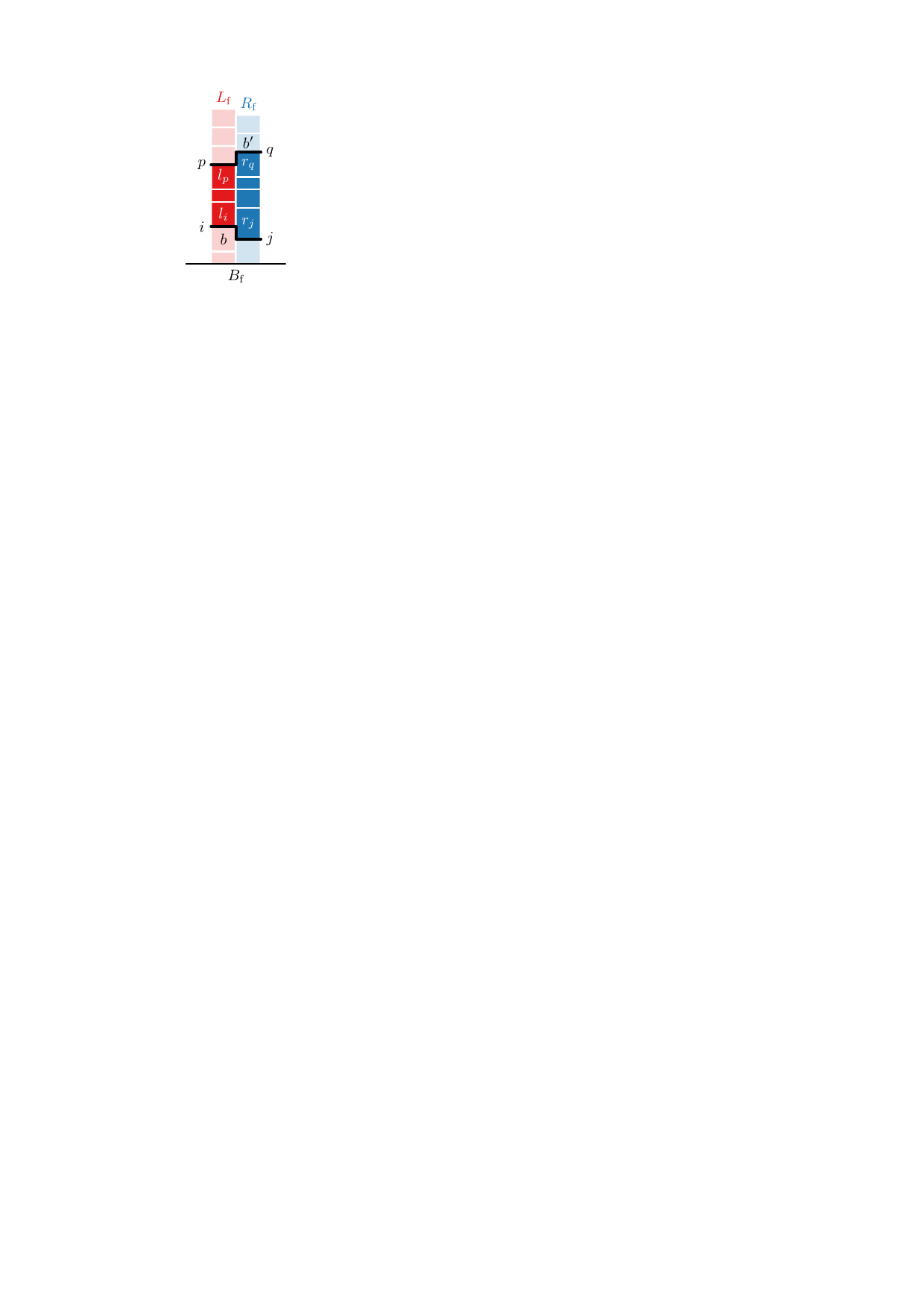}
    \centering
    \hfill
    \caption{Left: schematic of a linked bar chart, highlighting the subgraph $G'$ of dependent links on bars $B_1, B_2, B_3, B_4$. Gray blocks in between these bars represent sequences of bars with only independent blocks. Hatched regions in gray bars indicate the remaining independent blocks.
    Right: a tree decomposition of $G'$, with numbers referring to bar indices. Each node has a type: forget (orange), join (green), introduce (blue), or leaf (red). At forget node $u$, $X_u = \{B_2,B_4\}$ and bar $B_3$ is forgotten. As $B_3$ is the only bar in subtree $T_u$ (dashed outline) that is not in $X_u$, the past of $u$ is precisely~$\{ B_3 \}$. 
    }
    \label{fig:TreeDecomp}
\end{figure}

For a node $u \in T$, our dynamic program $P^*(u, \ldots)$ represents the minimum cost of all independent blocks in the bars of the past of $u$ and of all dependent links with at least one of its endpoints in the past of $u$.    
The additional parameters of $P^*(u,\ldots)$ are the way in which the endpoints of the dependent links in each of the bars in $X_u$ are positioned: we refer to this as the \emph{state} of a bar. For a node with a bag of three bars, we symbolically represent them using $S_1, S_2, S_3$; for fewer bars in its bag, we number accordingly.

The root, having an empty bag, has no further parameters and all bars are in its past, so $P^*(r)$ represents the minimal vertical link length.

\begin{restatable}{lemma}{generalCaseEntryReadTime}\label{lem:generalCaseEntryReadTime}
A node in $(T,X,r)$ has at most $O(\Delta^{3\delta})$ states.
\end{restatable}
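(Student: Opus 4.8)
The plan is to bound the number of states of a single bar and then observe that each bag contains at most three bars, so the node count is the cube of the per-bar count. Recall that, by \autoref{obs:noDLcross}, $G'$ is outerplanar and hence has treewidth at most $2$, which is exactly why every bag has at most three bars.

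First I would pin down what a state of a single bar $B$ encodes. By the discussion in \autoref{sec:defs}, the placement of a block in $B$ is fully described by its \emph{position}: the number $k$ of blocks from the opposite sequence ($R$ if the block lies in $L$, and vice versa) stacked below it. A state of $B$ records one such position for each of its \emph{dependent} blocks---the endpoints of dependent links incident to $B$---since these are precisely the placements that must be coordinated across bars. The positions of independent blocks need not enter the state: because an independent block has a fixed target $t$ with cost $|t - y|$ (\autoref{sec:defs}), it carries no cross-bar coupling, and once the dependent positions are fixed the independent blocks can be placed to minimise their own cost inside the dynamic program. Confirming this last point---that restricting a state to dependent-block positions loses no information needed to combine subproblems---is the step I would treat most carefully, though it follows directly from the independence property and is conceptually light.

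Next I would count. The number of dependent blocks in $B$ equals the degree of $B$ in $G'$, which is at most $\delta$ by definition. For a single dependent block, its position $k$ ranges over $\{0, 1, \ldots, s\}$, where $s$ is the size of the opposite sequence; since $s \le \deg_G(B) \le \Delta$, each dependent block admits at most $\Delta + 1$ positions. A state of $B$ is an independent choice of position per dependent block, so $B$ has at most $(\Delta+1)^{\delta}$ states. This over-counts, as valid merges force monotonicity among dependent blocks sharing a sequence, but an upper bound suffices. Using $\delta \le \Delta$, we have $(\Delta+1)^{\delta} = \Delta^{\delta}\,(1+1/\Delta)^{\delta} \le e\,\Delta^{\delta} = O(\Delta^{\delta})$, which is the one small calculation I would spell out to justify absorbing the $+1$.

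Finally, a node $u$ has a bag $X_u$ of at most three bars, and its state is the tuple $(S_1, S_2, S_3)$ of the individual bar states (fewer entries if $|X_u| < 3$). Since these are chosen independently, the number of states of $u$ is at most $\bigl(O(\Delta^{\delta})\bigr)^3 = O(\Delta^{3\delta})$, as claimed. I do not expect any genuine obstacle here: the work is entirely in the clean definition of a per-bar state and the routine product bound, with the only subtlety being the justification (above) that dependent-block positions are a sufficient summary.
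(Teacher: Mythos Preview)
Your proposal is correct and follows essentially the same approach as the paper: bound the number of states per bar by $O(\Delta^{\delta})$ and then cube for the at-most-three bars per bag. The only cosmetic difference is in the per-bar count---you bound it by $(\Delta+1)^{\delta}\le e\,\Delta^{\delta}$ via independent position choices, whereas the paper views a valid state as an injection of the $\le\delta$ dependent blocks into the $\le\Delta$ total stacking indices and gets the falling factorial $\Delta!/(\Delta-\delta)!$; both arguments are equally elementary and yield the same asymptotic bound.
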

\begin{proof}
The state $S_i$ of a bar $B_i$ effectively describes its position, for each DL incident to $B_i$: the number of rightward blocks below it for a leftward block and vice versa.
Such a state is valid, only if each block maps to a unique index in the total stacking order. As such, $S_i$ describes an injective function from at most $\delta$ dependent links, to indices in $\{ 1,\ldots, \Delta\}$, subject to ordering constraints. Hence, there are at most $\frac{\Delta!}{(\Delta-\delta)!} = O(\Delta^\delta)$ 
possible states for one bar. As a node in the tree decomposition has at most three bars in its bag, the number of possible states for one node is hence at most $O(\Delta^{3\delta})$.
\end{proof}

Although we symbolically represent these states with at most three parameters, we must observe that these are $\delta$-dimensional objects and, as such, reading an entry of $P^*(u, \ldots)$ takes $O(\delta)$ time.

We compute $P^*(u,\ldots)$ by treating each node type separately, as shown by \autoref{lem:processLeafJoinIntro} and \autoref{lem:processForget} below. In particular, one node takes $O(\Delta^{3\delta} \delta)$ time. As there are $O(n)$ nodes to process, the total time required to compute $P^*(r)$ is $O(\Delta^{3\delta} \delta n)$, which proves \autoref{thm:generalCaseFPT}. 

\begin{restatable}{lemma}{processLeafJoinIntro}\label{lem:processLeafJoinIntro}
Processing a join, introduce, or leaf node takes $O(\Delta^{3\delta} \delta)$ time.
\end{restatable}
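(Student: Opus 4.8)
The plan is to treat the three node types separately and, in each case, to fill the table $P^*(u,\ldots)$ by iterating over its states. By \autoref{lem:generalCaseEntryReadTime} a node has $O(\Delta^{3\delta})$ states, and reading or writing an entry costs $O(\delta)$ time. It therefore suffices to show that every entry of $P^*(u,\ldots)$ is obtained from a constant number of lookups into the tables of $u$'s children, while keeping the states of $u$ and its children in direct correspondence; if instead we had to minimize over \emph{pairs} of child states, the running time would blow up to $O(\Delta^{6\delta})$, so avoiding that is the point of the argument.

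Leaf and introduce nodes are the easy cases. A leaf has an empty past, so $P^*(u)$ is the single entry $0$, computed in $O(1)$ time. For an introduce node with child $v$ introducing a bar $B^*$, the past is unchanged. The key observation is that $B^*$ has no dependent link to any bar of the past of $u$: such a bar occurs in some bag of $T_v$ but not in $X_v = X_u \setminus \{B^*\}$, whereas $B^*$ occurs in no bag below $u$, so by the edge-coverage property no single bag can contain both endpoints---a contradiction. Hence introducing $B^*$ adds neither independent-block cost nor dependent-link cost to the past, and we simply copy $P^*(u,\ldots,S^*) = P^*(v,\ldots)$ for every valid state $S^*$ of $B^*$. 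Since a state of $u$ determines the states of the child's bars, each of the $O(\Delta^{3\delta})$ entries is computed with one $O(\delta)$-time lookup.

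The join node is where the real argument lies. Let $v,w$ be the children, so $X_u = X_v = X_w$. First I would establish two structural facts. (i) The past of $v$ and the past of $w$ are disjoint: a bar occurring in both subtrees $T_v$ and $T_w$ must, by the connectedness of its occurrences, also occur in $X_u$, and so lies in neither past. (ii) No dependent link joins the past of $v$ to the past of $w$: a bag covering such a link would have to contain endpoints from both disjoint subtrees, which is impossible. Consequently the past of $u$ is the disjoint union of the two pasts, every independent block of the past is counted in exactly one child, and every dependent link with an endpoint in the past has that endpoint in exactly one of the two pasts. A link running from a past bar to a shared bag bar of $X_u$ is thus counted in exactly one child, and---crucially---its cost is evaluated consistently because the position of the bag bar is fixed by the shared state. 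Therefore $P^*(u,S) = P^*(v,S) + P^*(w,S)$ for every state $S$, using that the identical bags put the states of $u$, $v$, and $w$ in one-to-one correspondence. This is one addition and two $O(\delta)$-time lookups per state, for $O(\Delta^{3\delta}\delta)$ time in total.

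The main obstacle I anticipate is the correctness of the join rule: I must verify that no dependent link is double-counted or omitted across the split of the past, and confirm that the shared bags permit adding aligned entries directly rather than minimizing over pairs of child states. Once this is in place, the running-time bound follows uniformly across all three node types, since in each case we touch $O(\Delta^{3\delta})$ entries at $O(\delta)$ time each, giving the claimed $O(\Delta^{3\delta}\delta)$ bound.
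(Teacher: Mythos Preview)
Your proposal is correct and follows essentially the same approach as the paper: treat the three node types separately, argue that leaf and introduce contribute no new cost so entries are copied or set to zero, and for join use disjointness of the two pasts to write $P^*(u,S)=P^*(v,S)+P^*(w,S)$, giving $O(\Delta^{3\delta}\delta)$ overall. If anything, your join argument is more careful than the paper's, since you explicitly verify that no dependent link can run between the two pasts (ruling out double-counting), a point the paper leaves implicit.
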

\begin{proof}
    Each node type can be processed separately as detailed below.
    
\subparagraph{Leaf:}
    The past of $u$, and $X_u$ are empty: $P^*(u)$ is trivially 0, and takes $O(1)$ time to~compute.
    
\subparagraph{Join:}
    Let $v,w$ denote the two children of $u$ in $T$. By definition, $X_u = X_v = X_w$. The pasts of the children are disjoint: a bar that is both in $T_v$ and $T_w$ must be in $X_u$. Thus, the cost for $P^*(u,\ldots)$ is simply the sum of the costs of the children.
    Looking up both values takes $O(\delta)$ time; doing so for all $O(\Delta^{3\delta})$ potential states of $P^*(u, \ldots)$ takes  $O(\Delta^{3\delta} \delta)$ time.

\subparagraph{Introduce:}
    Node $u$ has precisely one bar in its bag that is not present in the bag of its child $v$.
    This bar therefore is not part of the past of $u$, nor can it have a dependent link to a bar in the past, due to the requirements of a tree decomposition. Therefore, the cost for $P^*(u, \ldots)$ is identical to that of its child node, given the states of the two bars that they have in common. 
    For example, if $|X_u| = 3$ and $S_\textrm{i}$ is the state of the introduced bar, this means that we compute 
    $P^*(u, S_1, S_2, S_\textrm{i}) = P^*(v, S_1, S_2)$
    Looking up this value takes $O(\delta)$ time; doing so far all $O(\Delta^{3\delta})$ potential states of $P^*(u, \ldots)$ takes  $O(\Delta^{3\delta} \delta)$ time.
\end{proof}

\begin{restatable}{lemma}{processForget}\label{lem:processForget}
Processing a forget node takes $O(\Delta^{3\delta} \delta)$ time.
\end{restatable}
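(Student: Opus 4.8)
The plan is to compute $P^*(u,\ldots)$ for a forget node $u$ directly from the table of its unique child $v$. Write $B_f$ for the single bar with $B_f \in X_v \setminus X_u$. First I would record that the past of $u$ equals the past of $v$ together with $B_f$: since $B_f \notin X_u$ while $B_f \in X_v$ and the bags containing $B_f$ form a connected subtree, the only parent edge of $v$ leaves that subtree, so $v$ is its highest node and every dependent-link neighbour of $B_f$ shares a bag with $B_f$ somewhere in $T_v$. A state of $v$ is therefore a state $\sigma$ of $u$ augmented with a state $S_f$ for $B_f$, and I would define the transition by projecting out $S_f$:
\[
P^*(u, \sigma) = \min_{S_f}\Bigl( P^*(v, (\sigma, S_f)) + \mathrm{indcost}(B_f, S_f) + \sum_{\substack{c \in X_u \\ \{B_f,c\}\in E'}} \lambda(\{B_f,c\}, S_f, \sigma_c) \Bigr),
\]
where $\sigma_c$ denotes the state of bar $c$ within $\sigma$.

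Next I would justify the two incremental terms and argue that each cost is charged exactly once. Forgetting $B_f$ newly commits its independent blocks to the past, so their optimal cost, given that the dependent blocks of $B_f$ occupy the positions prescribed by $S_f$, must be added; this is $\mathrm{indcost}(B_f,S_f)$. Forgetting $B_f$ also newly completes precisely those dependent links $\{B_f,c\}$ whose other endpoint $c$ is not yet in the past of $v$; by the connectivity observation above any such $c$ lies in $X_v$, hence in $X_u$ (as $c \ne B_f$), so $c$'s position is read from $\sigma_c$ and the link cost is the $O(1)$ value $\lambda(\{B_f,c\},\cdot,\cdot)$. Dually, a dependent link from $B_f$ to a bar already in the past of $v$ was completed when that bar was forgotten—at which moment $B_f$ was still in the bag, again by connectivity of the bag-subtrees—so it is already counted in $P^*(v,\cdot)$ and must not be added again. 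This shows every dependent link contributes exactly once, at the forget node of its deeper endpoint.

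For the running time I would iterate over the $O(\Delta^{3\delta})$ states $(\sigma, S_f)$ of $v$, each of which maps to a single $u$-state $\sigma$, and relax the corresponding entry of $P^*(u,\cdot)$. For a fixed $v$-state, reading $P^*(v,\cdot)$ and the relevant positions costs $O(\delta)$, there are at most $|X_u|\le 2$ neighbours $c$ contributing an $O(1)$ link cost, and $\mathrm{indcost}(B_f,S_f)$ is an $O(1)$ lookup; initialising the $O(\Delta^{2\delta})$ entries of $P^*(u,\cdot)$ is dominated by this. Hence the per-node cost is $O(\Delta^{3\delta}\delta)$, matching the other node types.

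The main obstacle is the term $\mathrm{indcost}(B_f, S_f)$: once the dependent blocks of $B_f$ are pinned by $S_f$, they split the stack into $O(\delta)$ gaps, and within each gap the remaining independent blocks of $L$ and $R$ must still be merged optimally to minimise the sum of their per-block costs $\lambda(\cdot,\cdot)$. I would handle this with a preprocessing step that, for every bar and each of its $O(\Delta^\delta)$ states, solves this merge by a dynamic program identical in spirit to that of \autoref{lem:noDL}, at $O(\Delta^2)$ per state and thus $O(\Delta^{\delta+2} n)$ over all bars—within the $O(\Delta^{3\delta}\delta n)$ budget of \autoref{thm:generalCaseFPT} (the case $\delta=0$ has no dependent links and is \autoref{lem:noDL}). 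After this precomputation each $\mathrm{indcost}$ value is a constant-time lookup, so the forget node runs within the claimed bound.
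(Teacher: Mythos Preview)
Your proposal is correct and follows the same structure as the paper's proof: at a forget node you minimise over the state $S_f$ of the forgotten bar $B_f$, adding the optimal independent-block cost of $B_f$ under $S_f$ and the cost of the at most two dependent links from $B_f$ to bars still in $X_u$; your no-double-counting argument via connectivity of bag-subtrees is exactly the right justification. The only substantive difference is in how the independent-block term is precomputed: you tabulate $\mathrm{indcost}(B_f,S_f)$ for every state at $O(\Delta^2)$ apiece, costing $O(\Delta^{\delta+2})$ per bar, whereas the paper observes that $S_f$ splits $B_f$ into $O(\delta)$ gaps each described by a $4$-tuple $(i,p,j,q)$, precomputes gap tables $D_{i,j}(p,q)$ in $O(\Delta^3)$ total per forget node, and then assembles $IL(S)$ from $O(\delta)$ table lookups per state---asymptotically cheaper preprocessing when $\delta$ is large, but both variants land within the stated $O(\Delta^{3\delta}\delta)$ per node for $\delta\ge1$.
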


\begin{proof}
    In a forget node $u$, some bar $B_\textrm{f}$ has become part of the past: it is not in $X_u$ but it is in the bag $X_v$ of the child node $v$.
    To appropriately represent the cost of the past, we must therefore include two components: the cost of the independent blocks of $B_\textrm{f}$ and the cost of the dependent links between $B_\textrm{f}$ and the bars in $X_u$ (and thus also in $X_v$).
    We iterate over all possible states $\mathcal{S}_\textrm{f}$ of $B_\textrm{f}$: together with the states prescribed for the remaining bars, this fully determines the cost of their dependent links (if present). Furthermore, it prescribes precisely between which dependent links each independent block is to be placed. 

    \newcommand{\IL}{\mathit{IL}}
    \newcommand{\DL}{\mathit{DL}}
    In the remainder, we assume that $X_u$ has two bars, $B_1$ and $B_2$---three is not possible as $X_v$ has one more bar. We can then express our dynamic program as 
    \[
        P^*(u, S_1, S_2) = \min_{ S \in \mathcal{S}_\textrm{f}} P^*(v, S_1, S_2, S) + \IL(S) + \DL(S, S_1) + \DL(S, S_2),
    \]
    where $\IL(S)$ indicates the minimal cost of the independent blocks of $B_\textrm{f}$ given the state $S$ of its dependent links, and $\DL(S,S_i)$ the cost of the dependent link between the bars of which the states are given. 
    
    Computing $\DL(S,S_i)$ is comparatively straightforward. If a link $e$ between $B_\textrm{f}$ and $B_i$ exists, then $\DL(S,S_i) = \lambda(e,p,q)$ where $p,q$ are the placements of the blocks of $e$ prescribed by $S$ and $S_i$; otherwise, $\DL(S,S_i) = 0$. Determining the existence of a link and the placements can trivially be done in $O(\delta)$ time.
    
    To compute $\IL(S)$, we first observe that the placement of the dependent blocks prescribed by $S$, partitions the independent blocks into sets: one set between each pair of consecutively placed dependent blocks, one before the first dependent block and one after the last dependent block.
    One such a set contains exactly a consecutive subset, in order, of $L_\textrm{f}$ and of $R_\textrm{f}$. As such, we can characterize the subset through four indices: $i,p$ for $L_\textrm{f}$ and $j,q$ for $R_\textrm{f}$. That is, the set is $\{ l_i, \ldots, l_p \} \cup \{ r_j, \ldots, r_q\}$. 

\begin{figure}[t]
\centering
    \includegraphics[page=2]{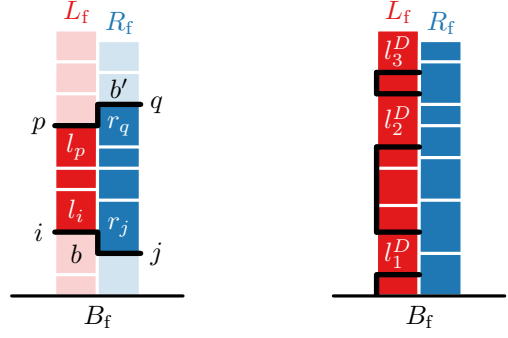}
    \caption{Left: $D_{i, j}(p, q)$ computes the minimum cost of the subset $\{l_i,\ldots ,l_p\} \cup \{r_j, \ldots, r_q\}$ between dependent blocks $b$ and $b'$. Right: the dependent links $l_1^D, l_2^D, l_3^D \in L_f$ partition the independent links in $L_f$ into three disjoint subsequences. For each such subsequence, we compute $D_{i, j}$, with $l_i$ being the lowest in a subsequence, for all $|R_f|$ possible values of $j$. 
    }
    \label{fig:AppILinFPT}
\end{figure}

    Note that each such a set can be ordered independently, and that the cost of such a set does not depend on the state $S$ either, beyond it determining which sets to consider. So, we can precompute, once per forget node, the costs of such sets.

    Let $D_{i,j}(p,q)$ represent the minimum cost of stacking blocks $\{l_{i}, \ldots,l_p\}\subseteq L_\textrm{f}$ and $\{r_{j}, \ldots,r_q\}\subseteq R_\textrm{f}$, on top of the lower blocks of both sets---note that the cost of these lower blocks is not included and their total height is fixed. 
    We can compute $D_{i,j}$ using a dynamic program akin to that of \autoref{lem:noDL}, with $i,j$ defining the base case (see \autoref{fig:AppILinFPT}, left). 
    We only need values of $p$ and $q$ such that the corresponding sets do not include dependent links; their maximal values are easily obtained from $L_\textrm{f}$ and $R_\textrm{f}$. Moreover, we only need $D_{i,j}$ where at least $i$ or $j$ is the first link after a dependent link, or the overall first link, in $L_\textrm{f}$ or $R_\textrm{f}$, respectively. For $D_{i,j}$ with $i$ being such a first link, let $p_i$ denote the largest value such that $\{l_i, \ldots, l_{p_i}\} $ contains no dependent link: these subsequences are disjoint for all such $i$ and correspond to $L_\textrm{f}$, partitioned by the dependent links: $\sum_i (p_i - i + 1) = \Delta - \delta$. With $|R_\textrm{f}|$ positions for $j$, and a dynamic program that takes $O((p_i-i+1)|R_\textrm{f}|)$ time to compute (see \autoref{fig:AppILinFPT}, right), the total time spent for these instances of $D_{i,j}$ is $O(\sum_i (p-i+1) |R_\textrm{f}|^2) = O(\Delta^3)$. Analogously, instances $D_{i,j}$ with $j$ just after a dependent link takes $O(\Delta^3)$ time.
    The total preprocessing time for a forget node is thus $O(\Delta^3)$.

    With $D_{i,j}$ available, we can compute $\IL(S)$, after ordering the placements of the dependent links in $O(\delta)$ time via a merge procedure,
    traversing all $O(\delta)$ consecutive pairs and looking up the minimum cost of the intermediate independent blocks using $D$ in $O(1)$ time. 

    Computing the result one state $S \in \mathcal{S}_\textrm{f}$ takes $O(\delta)$ time; processing one state of $P^*(u,\ldots)$ takes $O(\Delta^\delta \delta)$. As a forget node has at most two bars, it has $O(\Delta^{2\delta})$ states. Thus, the total time to process a forget node is $O(\Delta^{3\delta} \delta + \Delta^3) = O(\Delta^{3\delta}\delta)$.
\end{proof}

\begin{restatable}{theorem}{NPHardMultiVertex}\label{thm:nphardmultivertex}
Minimizing vertical link length for a fixed bar order is strongly NP-hard, when vertices have multiple unlinked blocks and these may be ordered arbitrarily throughout the linked blocks.
\end{restatable}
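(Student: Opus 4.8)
The plan is to reduce from \textsc{3-Partition}, which is strongly NP-hard: given $3k$ positive integers $a_1,\dots,a_{3k}$ with $\sum_i a_i = kB$ and each $a_i \in (B/4, B/2)$, decide whether the indices can be partitioned into $k$ triples, each summing to exactly $B$. Under the $(B/4, B/2)$ constraint, any partition of the $a_i$ into groups each summing to $B$ automatically consists of triples, so it suffices to detect the existence of such a grouping. Since all numbers are polynomially bounded, a reduction whose constructed heights and block counts are polynomially bounded establishes strong NP-hardness, so I would keep all sizes $O(\mathrm{poly})$ throughout.

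I would build the following instance. Introduce a leftmost \emph{stacking bar} $B_c$ that carries the item values as $3k$ unlinked blocks of heights $a_1,\dots,a_{3k}$, together with $k$ thin linked \emph{divider blocks} $d_1, \dots, d_k$ of unit height. All dividers point rightward, so in $R_{B_c}$ they occur in the fixed order $d_1 < \dots < d_k$ from bottom to top, while---and this is exactly where the generalized model is used---the unlinked item blocks may be freely interleaved among them. To the right of $B_c$ I place, for each $j$, an \emph{obstacle bar} $T_j$ consisting of a single unlinked block of height $t_j := jB + j - \frac{1}{2}$, and a \emph{partner bar} $P_j$ consisting of the single linked block that is the other endpoint of $d_j$; the bars are ordered $B_c, T_1, P_1, T_2, P_2, \dots, T_k, P_k$. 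Because $P_j$ holds only one block, its center is pinned at height $\frac{1}{2}$ and cannot move, regardless of how blocks may be reordered.

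The purpose of this layout is to make each divider link \emph{independent} with a prescribed target. For link $d_j$, the intermediate bars are exactly $T_1, P_1, \dots, T_j$, whose maximum height is $H = t_j$ (attained by $T_j$); since $H = t_j \ge \frac{1}{2} = {\uparrow}b'$ for the pinned partner endpoint $b'$, case~1 of the independence definition applies with target $t = t_j$. Hence the cost of link $d_j$ equals $|t_j - y(d_j)| + |t_j - \frac{1}{2}|$, where the second term is a fixed constant. Writing $S_j$ for the total height of the item blocks placed below $d_j$, the center of $d_j$ is $y(d_j) = S_j + (j-1) + \frac{1}{2}$, so $|t_j - y(d_j)| = |jB - S_j|$. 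Summing over $j$, the total vertical link length equals $C + \sum_{j=1}^k |jB - S_j|$ for the fixed constant $C = \sum_j (jB + j - 1)$, and the divider links are pairwise non-crossing because their spans all share $B_c$ and are nested. With threshold $W = C$, each realizable stacking of $B_c$ corresponds to a placement of the items into the gaps delimited by the dividers, and its cost exceeds $C$ unless $S_j = jB$ for every $j$: the equality $S_k = kB$ forces all items below $d_k$, and then $S_j - S_{j-1} = B$ means the items between consecutive dividers form $k$ groups each summing to $B$; conversely any such grouping is realizable and attains cost exactly $C$. Thus the optimum equals $C$ iff the \textsc{3-Partition} instance is a yes-instance, and the construction uses $O(k)$ bars, $O(k)$ blocks, and heights bounded by $kB + k$, so it is polynomial and size-preserving.

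The main obstacle, and the step I would treat most carefully, is pinning down the link targets: in the generalized model every bar's blocks may be reordered, so I cannot simply assert that a partner endpoint sits at a chosen height. The device of giving each partner bar a single block (fixing its center) and using a dedicated tall obstacle bar $T_j$ to invoke case~1 is what forces each divider link to be independent with the exact target $t_j$ and turns the objective into the clean form $C + \sum_j |jB - S_j|$. Verifying that $T_j$ is the unique tallest intermediate bar of $d_j$, that $H = t_j < {\uparrow}d_j$ so $d_j$ itself remains free to move, and that no obstacle bar disturbs the type of another link, is the crux of the correctness argument; the remaining bookkeeping (the cancellation of the $\frac{1}{2}$ terms and polynomial boundedness) is routine.
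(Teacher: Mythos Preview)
Your proposal is correct and follows essentially the same reduction from \textsc{3-Partition} as the paper: a central bar carries the item values as freely interleavable unlinked blocks together with unit-height divider links whose required heights encode the partial sums $jB$, so that the optimum is attained precisely when the items split into groups summing to $B$. The only cosmetic difference is the gadget that fixes each divider's target height---you place a tall isolated obstacle bar $T_j$ so the link becomes independent with target $t_j$ and absorb a fixed additive cost $C$, whereas the paper instead elevates each partner block by stacking it on a large auxiliary linked block, making the optimal total vertical length exactly zero; both devices yield the same constraint $S_j=jB$ and the same polynomial, size-preserving reduction.
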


\begin{figure}[b]
\centering
\includegraphics{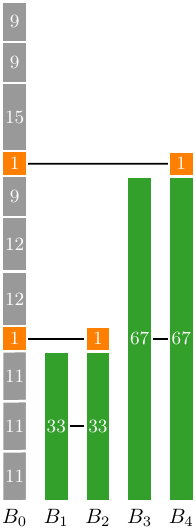}
\caption{Example construction for $S = \{ 9,9,9, 11,11,11, 12,12,15 \}$, requiring $k = 3$ triples of $T = 33$. It uses $5$ bars, with $2$ links from $B_0$ (orange) and 2 links to fix the vertical position at the other bars (green). For compactness, vertical height has been compressed, and the linked blocks of $B_0$ have been exaggerated for clarity.} 
\label{fig:3partNPhard}
\end{figure}


\begin{proof}
We prove a reduction from 3-Partition, which is strongly NP-hard \cite{NP-Completeness}. In this problem, we are given a multiset of positive integers $S = \{s_1, \ldots, s_n \}$ and the question is whether $S$ can be partitioned into $k = n/3$ triples such that each triple sums to the same value. Let $T = \sum_{s \in S} s / k$ denote the total value of a triple in a solution. The problem remains strongly NP-hard, even when every integer in $S$ must strictly lie between $T/4$ and $T/2$: that is, one can only obtain a sum of $T$ with exactly three integers.

Our reduction uses $2k - 1$ bars $B_{0}, \ldots, B_{2k-2}$ (see \autoref{fig:3partNPhard}). Bar $B_0$ has $n$ unlinked blocks, with weights matching exactly those of $S$. Moreover, it has $k-1$ linked blocks of weight $1$, linking to the even-numbered bars $B_{2}, B_{4}, \ldots, B_{2k-2}$. A bar $B_{2i}$ for $i \geq 1$ further has one more linked block of weight $i \cdot T + (i-1)$, connecting it to bar $B_{2i-1}$.

Below, we argue that this linked bar chart can be ordered to achieve vertical link length zero, if and only if $S$ admits a 3-partition. 

By construction, bars $B_{i}$ with $i \geq 1$ have no choice in their block ordering, as there are no unlinked blocks and all links leave the bar on one side and their blocks must, hence, respect the bar ordering. This also implies that the links between $B_{2i-1}$ and $B_{2i}$ for $i \geq 1$ have zero vertical length, as both blocks are placed at the bottom. The other block at $B_{2i}$ has its vertical center at $i \cdot T + (i-1) + 0.5$. This value readily exceeds the maximum height of all bars between $B_{0}$ and $B_{2i}$.

For bar $B_0$, the order of the linked blocks is also fixed, as all links are towards the right. Hence, the only choice is how to arrange the unlinked blocks in between the linked blocks.

If $S$ admits a 3-partition, then we place exactly one triplet at the bottom and top, and one triplet in between each pair of linked blocks. As there are $i$ triplets before the linked block to bar $B_{2i}$, and $(i-1)$ earlier linked blocks, this puts the block to bar $B_{2i}$ such that its center is at height $i \cdot T + (i-1) + 0.5$, matching the vertical center at bar $B_{2i}$. Hence, all links incur zero vertical link length.

If the linked bar chart admits a solution with zero vertical link length, then each of the linked blocks in $B_0$ is placed at a height matching the other end point. As such, the value before the block to bar $B_{2}$ must be $T$ and the value before $B_{2i}$ for $i > 1$ must be $i\cdot T+(i-1)$: thus a total height of $T$ was created between the block for $B_{2i}$ and the linked block to $B_{2i-2}$. By the restriction on the values in $S$, we can only create a subset that sums to $T$ with a triple, and hence the ordering induces a 3-partition of $S$.
\end{proof}

\section{Conclusion}

We studied algorithmically minimizing vertical link length for linked bar charts. This problem provides a new dimension to one-page book embeddings, by considering how the blocks (vertex and edge weights) at each bar (vertex) are to be stacked. For a fixed bar order, we described polynomial-time algorithms for special cases on the subgraph of dependent links. The general case is fixed-parameter tractable, parameterized by the maximum degree of a bar; the algorithm is polynomial when the degree in the dependent subgraph is bounded by a constant.

\subparagraph{Future work.}
The primary open problem is deciding whether minimizing vertical link length for a fixed bar order is NP-hard. Though we show a generalized version to be NP-hard, this reduction relies mostly on plurality of vertex weights. Our FPT results imply that, if a reduction is to be shown, this must rely on an instance with bars having many dependent links, and that is well connected to avoid being close to a tree.

Furthermore, we assumed a fixed bar order: what are the algorithmic possibilities for optimizing linked bar charts when we can control both the bar order as well as the stacking order of the blocks?

\subparagraph{Generalizations.}
There are several natural generalizations to our problem.

\begin{description}
    \item[Asymmetric links:] The original design assumes that the blocks on both ends of a link are of equal height. However, one could also use a design where these sizes are not the same, for example, to communicate an expected distribution. We observe that our algorithms do not rely on linked blocks being the same size, and as such, generalize to solve this case.
    \item[Other measures:] Beyond vertical link length, other quality measures could be considered. Our results straightforwardly apply when the measure permits decomposing its cost for independent links and computing the cost of a link takes $O(1)$ given its placed endpoints; for example, bend minimization.    
    \item[Directed graphs and multigraphs:] The original design assumes an undirected graph, with at most one link between two bars. A natural generalization would be to consider directed or multiple links between two bars.  In the context of shared quantities, it may be relevant to show different sizes, for example, to indicate communication on different platforms, or to differentiate between sending and receiving messages. 
    This opens up new questions, depending on whether the multiple links between two bars can be reordered, and whether they should be placed as one contiguous chunk within a bar. 
    \item[Hypergraphs:] A natural extension is also to go beyond regular graphs and instead consider a hypergraph to describe the cross-category values, linking multiple bars at once; representing, for example, communication in a group chat. Again, this generalization gives rise to new design questions that need to be answered before the algorithmic problem can be studied. The foremost question is how to represent a single link between three or more bars. 
\end{description}



\bibliography{arxiv}

\end{document}